\documentclass{birkjour}
\usepackage{fullpage}
\usepackage{graphicx}
\usepackage{amsmath}
\usepackage{amsthm} 
\usepackage{dsfont}
\usepackage{amssymb}
\usepackage{array}
\usepackage{multicol}
\usepackage[latin1]{inputenc}
\usepackage[T1]{fontenc}
\usepackage{MnSymbol}
\usepackage{multicol}
\usepackage{tikz}
\usepackage{tikz-cd}

\newtheorem{de}{Definition}[section]
\newtheorem{theo}{Theorem} [section] 
\newtheorem{prop}{Proposition}[section]
                                   
\newtheorem{lem}{Lemma}[section]
 
\newtheorem{coro}{Corollary}[section]

\newcommand{\Hcal}{\mathcal{H}}
\newcommand{\Kcal}{\mathcal{K}}

\newcommand{\Lcal}{\mathcal{L}}
\newcommand{\Rcal}{\mathcal{R}}
\newcommand{\Pcal}{\mathcal{P}}
\newcommand{\Acal}{\mathcal{A}}
\newcommand{\Scal}{\mathcal{S}}

\newcommand{\Ecal}{\mathcal{E}}
\newcommand{\Bcal}{\mathcal{B}}

\newcommand{\Prob}{\mathbb{P}}
\newcommand{\Fcal}{\mathcal{F}}
\newcommand{\Vcal}{\mathcal{V}}

\newcommand{\R}{\mathbb{R}}

\newcommand{\Zbb}{\mathbb{Z}}

\newcommand{\C}{\mathbb{C}}

\newcommand{\sca}[2]{\langle #1 ,#2\rangle}
\newcommand{\norm}[1]{\left\| #1 \right\|}
\newcommand{\proj}[2]{|#1\rangle\langle #2|}

\newcommand{\Tr}{\text{Tr}}

\numberwithin{equation}{section}

\begin{document}
\author{Ivan Bardet}
\title{Classical and Quantum Parts \\of the Quantum Dynamics: \\the Discrete-Time Case}
\address{Institut Camille Jordan, Universit\'e Claude Bernard Lyon 1, 43 boulevard du 11 novembre 1918, 69622 Villeurbanne, cedex France}
\email{ivan.bardet@math.univ-lyon1.fr}
\thanks{Work supported by ANR-14-CE25-0003 "StoQ" }
\begin{abstract}
In the study of open quantum systems modeled by a unitary evolution of a bipartite Hilbert space, we address the question of which parts of the environment can be said to have a "classical action" on the system, in the sense of acting as a classical stochastic process. Our method relies on the definition of the \emph{Environment Algebra}, a relevant von Neumann algebra of the environment. With this algebra we define the classical parts of the environment and prove a decomposition between a maximal classical part and a quantum part. Then we investigate what other information can be obtained via this algebra, which leads us to define a more pertinent algebra: the \emph{Environment Action Algebra}. This second algebra is linked to the minimal Stinespring representations induced by the unitary evolution on the system. Finally in finite dimension we give a characterization of both algebras in terms of the spectrum of a certain completely positive map acting on the states of the environment.
\end{abstract}
\maketitle

\section{Introduction}

In the Markovian interpretation of open quantum systems, the equation describing the evolution of a system is the sum of an Hamiltonian term and additional terms representing the noises introduced by the environment. In some cases it is possible to model the action of the environment on the system by a classical noise. This way the evolution of the system become the average over the noise of some unitary evolutions. In a more mathematical language, the evolution of the system is given by the average evolution of a Markov process taking value in the unitary group of the system.
\\For instance, the master equation describing the evolution of the state of the system can be written as a Schr\"odinger equation perturbed by some classical noises \cite{SBOM}. Such a model can be fruitful as it allows to borrow powerful tools from stochastic calculus to the study of open quantum systems. This approach has already been successfully used in the study of some two-levels systems \cite{A-B} and classical reducing \cite{Reb1}.
\\For such equations, K\"ummerer and Maassen show in \cite{K-M4} that the evolution can be dilated as a unitary evolution on the system and its environment, such that the environment appears as a Markov process. They called such dilations \emph{essentially commutative}. More particularly, they show that the three following assertions are equivalent when the system has a finite number of degree of liberty:
\begin{enumerate}
\item There exists a convolution semigroup of probability measures on the group of automorphisms of the set of linear map on the system, such that the evolution is given by the expectation of the convolution semigroup.
\item There exists an essentially commutative dilation of the Quantum Markov Semigroup.
\item The operator algebra generated by the evolution on the environment is commutative.
\end{enumerate}

In this article we have a different point of view, as we start directly from a unitary evolution between the system and its environment. More particularly we focus on the case of a one-step evolution, that is, when the evolution is given by a unitary operator on a tensor product of the system and the environment (a bipartite Hilbert space).
\\In this situation the equivalence between points 1. and 2. has already been proven by Attal and al. Indeed in \cite{ADP2} they show which unitary operators can be written in terms of classical noises emerging from the environment and characterize such noises: the \textit{obtuse random variables}. This provides a discrete analogue of an essentially commutative dilation. As proved in \cite{A-D}, in this case the evolution is equivalent to a random walk on the unitary group of the system. The starting point of this article is the equivalence between the two previous properties and the fact that the environment interferes with the system via a commutative algebra, which gives a more solid definition of a classical environment. 
\\On the other hand, some evolutions are understood to be typically quantum or non-commutative, although there is no clear definition of what this means. This is for instance the case for the Spontaneous Emission, where the one-step evolution is given by the following unitary operator on $\C^2\otimes\C^2$:

\begin{equation}
U_{\text{se}}=\begin{pmatrix}1&0&0&0\\0&\cos{\theta}&-\sin{\theta}&0\\0&\sin{\theta}&\cos{\theta}&0\\0&0&0&1\end{pmatrix}.
\label{eqSpontemiss}
\end{equation}

Our goal is to properly define the two situations mentioned above, that is, to make the distinction between a classical and a purely quantum environment. Thus the first goal of this article is to characterize, in terms of an operator algebraic framework, which unitary evolutions can be said to have a classical or purely quantum environment. Furthermore, we want to give a partial answer to the question: How far is the environment from being classical or purely quantum?
\\We answer this question by defining a relevant von Neumann algebra of the environment, that we call the \emph{Environment Algebra}. The environment is classical, or commutative in our definition, if this algebra is commutative. Moreover with the help of this algebra we are able to define and identify the classical parts of the environment, that is the parts where the dynamics reduces to an evolution with classical environment. In the same way we can say that an environment is quantum, or far from being classical, if it has no classical part. It naturally leads us to prove a decomposition of the environment between a classical and a quantum part. 

The environment algebra mentioned above allows to characterize whenever the unitary evolution can be written with classical noises. A natural question is then what other properties of the evolution can be obtained from it. However it appears that this algebra is "to big" to provide finer results. For instance in some cases a purely quantum environment can nonetheless lead to an evolution of the system driven by classical noises: a quantum environment can have a classical \emph{action} on the system. This comes from the fact that the system does not see evolutions occurring on the environment only. Consequently two evolutions on the bipartite system can lead to the same one when restricted to the system. We define the \emph{Environment Action Algebra} as the relevant algebra in this context. Two unitary operators that have the same action on the system will have the same Environment Action Algebra.
\\Under simple hypotheses, we prove that it is always possible to find a unitary operator on the bipartite system which leads to the same evolution on the system, but whose Environment Action Algebra coincide with its Environment Algebra. In this sense, it is always possible to restrict the study to the Environment Algebra. Then, in order to illustrate its usefulness, we show a link between the Environment Action Algebra and minimal Stinespring representations.
\\Finally, we characterize both algebras in terms of the spectrum of a completely positive map, providing a practical way to determine both algebras.

We insist on the fact that in this article the term \emph{classical} means that we deal, at least implicitly, with a \emph{classical probability space}. Later it will become clear that the mathematical meaning of this word is rather \emph{commutative}, as in \emph{commutative algebra}. In our discussions we shall use both terms without distinction but we will prefer the latter when stating mathematical results.
\\Note also that the classical noises we are talking about are different in nature from the ones emerging in the context of quantum trajectories (\cite{B-G1} \cite{Pel1} \cite{Pel2}). In the latter case, classical noises appears because of a continuous monitoring of some observable of the environment. In our case no observation is performed and consequently the emergence of classical noises has to be interpreted as a manifestation of the classical nature of the action of the environment on the system.

This article is structured as follow. In Section \ref{sect1} we focus on the definition of the Environment Algebra and the subsequent decomposition of the environment between a classical and quantum part. We give some examples and we study the particular case where the evolution is given by an explicit Hamiltonian.
\\In Section \ref{sect2} we define the Environment Action Algebra as a more relevant algebra of the environment and study its properties as mentioned above.

\paragraph{Notations:}
Throughout this paper, we make use of the following notations:
\begin{itemize}
\item Most of the time, we consider a unitary operator $U$ acting on the tensored Hilbert space $\Hcal\otimes\Kcal$, where $\Hcal$ and $\Kcal$ are separable Hilbert space, modeling the system and the environment respectively.
\item $\Bcal(\Hcal)$ is the Banach space of all bounded operators on $\Hcal$.
\item $\Tr_\Kcal:\Lcal_1(\Hcal\otimes\Kcal)\to\Lcal_1(\Hcal)$ (where $\Lcal_1(\Hcal)$ is the space of trace-class operators on $\Hcal$) stands for the partial trace over $\Kcal$, that is, if $\rho$ is a trace-class operator on $\Hcal\otimes\Kcal$ than for any $X\in\Bcal(\Hcal)$,
\begin{equation*}
\Tr\left[\Tr_\Kcal[\rho]X\right]=\Tr\left[\rho\ \left(X\otimes I_\Kcal\right)\right].
\end{equation*}
Given a trace-class operator $\omega\in\Lcal_1(\Kcal)$, the operation of the partial trace is denoted by $\Tr_\omega:\Bcal(\Hcal\otimes\Kcal)\to\Bcal(\Hcal)$.
\item We will also use the Dirac notations: for any elements $e,f\in\Hcal$:
\begin{itemize}
\item $|e>$ is the linear map from $\C$ to $\Hcal$, $\lambda\in\C\mapsto\lambda|e>=\lambda e$. We most of the time identify the linear map $|e>$ with the element $e$ of $\Hcal$.
\item $<e|$ is its dual, that is the linear map from $\Hcal$ to $\C$ such that $g\in\Hcal\mapsto<e|g=\sca{e}{g}$;
\item and consequently $\proj{e}{f}$ stands for the linear operator on $\Hcal$ such that $g\in\Hcal\mapsto \proj{e}{f}g=\sca{f}{g}e$.
\end{itemize}
\item If $\Acal$ is a subset of $\Bcal(\Kcal)$, $\Acal'$ is its commutant, that is the set
\begin{equation*}
\Acal'=\{Y\in\Bcal(\Kcal),[Y,A]=0\text{ for all }A\in\Acal\}.
\end{equation*}
If $\Acal$ is a $*$-stable set, then $\Acal'$ is a von Neumann algebra. In this case, by the Bicommutant Theorem of von Neumann \cite{vN1}, the von Neumann algebra generated by $\Acal$ is the bicommutant $\Acal''$.
\end{itemize}

\label{sectintro}
\section{The Environment Algebra for one-step evolution}

The goal of this section is to prove a decomposition of the environment between a classical part and a quantum part, in a sense we shall define later. This is done by studying the proper operator algebra of the environment, the \emph{Environment Algebra}. In order to motivate the idea behind such a definition we first recall a result of Attal, Deschamps and Pellegrini on classical environment (see \cite{ADP2}).

\begin{theo}
Suppose $\Kcal\approx\C^d$ for some positive integer $d$. Let $U$ be a unitary operator acting on the space $\Hcal\otimes\Kcal$. Then the following assertions are equivalent:
\begin{enumerate}
\item There exists $d$ unitary operators $U_1,...,U_d$ on $\Hcal$ and an orthonormal basis $\psi_i$ of $\Kcal$ such that:
\begin{equation}
U=\sum_{i=1}^d{U_i\otimes\proj{\psi_i}{\psi_i} }.
\label{eqtheoclassicalnoise1}
\end{equation}
\item There exists operators $A,B_1,...,B_d\in\Bcal(\Hcal)$ and an obtuse random variable $X=(X_1,...,X_d)$ on $\C^d$ such that $U$ can be written in some orthonormal basis of $\Kcal$ as
\begin{equation}
U=A\otimes I_\Kcal + \sum_{i=1}^d{B_i\otimes M_{X_i}};
\label{eqtheoclassicalnoise2}
\end{equation}
where $M_{X_i}$ is isomorphic to the multiplication operator by the coordinate random variable $X_i$.
\end{enumerate}
\label{theoclassicalnoise}
\end{theo}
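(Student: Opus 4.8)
The plan is to observe that both conditions assert that $U$ lies in $\Bcal(\Hcal)\otimes\Dcal$ for one and the same maximal abelian subalgebra $\Dcal\subset\Bcal(\Kcal)$ — the algebra of operators diagonal in the orthonormal basis $(\psi_i)$ — and that the only difference between (1) and (2) is the choice of a linear basis of $\Dcal$: the rank-one projections $P_i:=\proj{\psi_i}{\psi_i}$ on one side, and the family $\{I_\Kcal, M_{X_1},\dots\}$ of multiplication operators on the other. The bridge between the two is the defining feature of an obtuse random variable, namely that $\{\Ind, X_1,\dots\}$ is an orthonormal basis of the finite-dimensional space $L^2$ of the underlying probability space, which I would identify isometrically with $\Kcal$. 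I therefore intend to prove the two implications by transporting $U$ between these two bases of $\Dcal$, using that unitarity of $U$ is equivalent to unitarity of its diagonal blocks.

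For (2)$\Rightarrow$(1), I would start from $U=A\otimes I_\Kcal+\sum_i B_i\otimes M_{X_i}$. Since the $M_{X_i}$ are multiplication operators on the same finite probability space they commute, hence are simultaneously diagonalized in the orthonormal basis $(\psi_i)$ of atomic (normalized) indicator functions; writing $M_{X_i}=\sum_j X_i(\omega_j)P_j$ and collecting terms gives $U=\sum_j U_j\otimes P_j$ with $U_j=A+\sum_i X_i(\omega_j)B_i$. It then remains to check that each $U_j$ is unitary. Using $P_jP_k=\delta_{jk}P_j$ and $\sum_j P_j=I_\Kcal$ one computes $U^*U=\sum_j U_j^*U_j\otimes P_j$, so the hypothesis $U^*U=I_\Hcal\otimes I_\Kcal$ forces $U_j^*U_j=I_\Hcal$, and symmetrically $U_jU_j^*=I_\Hcal$; thus every $U_j$ is unitary and (1) holds. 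Note that this direction uses only that the $M_{X_i}$ are diagonal, not the finer obtuse structure.

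For the converse (1)$\Rightarrow$(2), I would first attach to the basis $(\psi_i)$ an obtuse random variable: take the probability space to be the atom set indexing $(\psi_i)$, identify $L^2$ isometrically with $\Kcal$ by $\Ind_{\omega_j}/\sqrt{p_j}\leftrightarrow\psi_j$, and invoke the existence of obtuse random variables to produce coordinates $(X_i)$ for which $\{\Ind, X_1,\dots\}$ is orthonormal; each $M_{X_i}$ is then diagonal in $(\psi_i)$. Substituting $M_{X_i}=\sum_j X_i(\omega_j)P_j$ into the desired formula and matching with $U=\sum_j U_j\otimes P_j$ reduces (2) to solving the linear system $U_j=A+\sum_i X_i(\omega_j)B_i$ for the operator coefficients $A,B_1,\dots$. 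This system is governed by the matrix whose rows are $(1,X_1(\omega_j),\dots)$, and the decisive point is that the obtuseness of $X$ — precisely, the orthonormality of $\{\Ind,X_1,\dots\}$ in $L^2$ — makes this matrix invertible, so that the coefficients $A,B_i$ exist. I expect this invertibility, together with recalling that a suitable obtuse random variable exists for any such basis, to be the main obstacle: it is the one place where the special algebraic nature of obtuse systems, rather than mere simultaneous diagonalizability, is genuinely needed.
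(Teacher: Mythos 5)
Your argument is sound, but there is nothing in the paper to compare it against: Theorem \ref{theoclassicalnoise} is \emph{recalled} from \cite{ADP2}, not proved here, so the only in-paper material is the discussion immediately following it, which establishes that (1) is equivalent to $U\in\Bcal(\Hcal)\otimes\Acal$ for a commutative $\Acal$. Your proposal supplies exactly the missing second half of that picture --- the change of linear basis of the diagonal algebra between the rank-one projections $\proj{\psi_j}{\psi_j}$ and the family $\{I_\Kcal, M_{X_1},\dots\}$ --- and both implications go through as you describe. In (2)$\Rightarrow$(1) the computation $U^*U=\sum_j U_j^*U_j\otimes P_j$ correctly forces each $U_j$ to be unitary, and you are right that obtuseness is not needed there. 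In (1)$\Rightarrow$(2) the crux is indeed the invertibility of the matrix $T$ with rows $\bigl(1,X_1(\omega_j),\dots\bigr)$: orthonormality of $\{\Ind,X_1,\dots\}$ in $L^2(\Omega,p)$ reads $T^*DT=I$ with $D=\mathrm{diag}(p_j)$ positive definite, which makes the square matrix $T$ invertible, and existence of such a family is just Gram--Schmidt extending $\Ind$ to an orthonormal basis of $L^2$ (one then checks the resulting values are the $d+1$ distinct points of an obtuse system, so $X$ is genuinely obtuse in the sense of \cite{ADP1}).

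One bookkeeping point you should make explicit rather than elide with ``$\{\Ind,X_1,\dots\}$'': in the conventions of \cite{ADP2} an obtuse random variable $(X_1,\dots,X_d)$ on $\C^d$ takes $d+1$ values, so $L^2(\Omega)\cong\C^{d+1}$ and $\{\Ind,X_1,\dots,X_d\}$ has $d+1$ members. As the theorem is transcribed here with $\Kcal\approx\C^d$, the family $\{I_\Kcal,M_{X_1},\dots,M_{X_d}\}$ would have $d+1$ elements inside a $d$-dimensional diagonal algebra and could not be linearly independent; the consistent reading (and the one your proof actually handles) is $\dim\Kcal=d+1$, or equivalently $d$ basis vectors matched with coordinates $X_1,\dots,X_{d-1}$. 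This is an indexing slip in the statement rather than in your argument, but your proof should state which convention it is using, since the invertibility of $T$ is precisely the assertion that the counts match.
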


\noindent Thus unitary operators such as in Equation (\ref{eqtheoclassicalnoise1}) display strong classical behavior, as they are linked in a one to one way with some particular class of random variables: the obtuse random variables. As we shall not need them in the following, we do not wish to explain more on the matter. Instead we advise the reader to consult \cite{ADP1} and \cite{A-P2}.

Our first goal in this section is to find a characterization of Equation (\ref{eqtheoclassicalnoise1}) that can be generalized in infinite dimension. This is the role of the Environment Algebra. It is based on the following remark: $U$ is of the form (\ref{eqexcom1}) if and only if it belongs to a von Neumann algebra $\Bcal(\Hcal)\otimes\Acal$ where $\Acal$ is a \emph{commutative} algebra on $\Kcal$. Indeed, if $U$ is of the form (\ref{eqtheoclassicalnoise1}), then define the algebra $\Acal$ as:

\begin{equation*}
\Acal=\left\{\sum_{i=1}^d{f(i)\proj{\psi_i}{\psi_i}},\ f\in L^\infty(\{1,...,p\})\right\}.
\end{equation*}

\noindent Then clearly $U\in\Bcal(\Hcal)\otimes\Acal$. Conversely, assume that there exists a commutative von Neumann algebra $\Acal$ on $\Kcal$ such that $U\in\Bcal(\Hcal)\otimes\Acal$. It is well-known that this algebra takes the form:

\begin{equation}
\Acal=\left\{\sum_{i=1}^m{f(i)P_i},\ f\in L^\infty(\{1,...,m\})\right\},
\end{equation}

\noindent where the $P_i$'s are uniquely defined mutually orthogonal projections that sums to the identity. As $U\in\Bcal(\Hcal)\otimes\Acal$, there exist unitary operators $V_1,...,V_m$ on $\Hcal$ such that:

\begin{equation}
U=\sum_{i=1}^m{V_i\otimes P_i}.
\label{eqexcom1}
\end{equation}

\noindent We will show in Proposition \ref{propenvcom} how this equation can be generalized in any dimension. In general, we will see that the environment is commutative if and only if the Environment Algebra is commutative.

To finish this small introduction on classical environment, let us remark that there are still two additional classical behaviors connected to these unitary operators:
\begin{itemize}
\item When considering several steps of the evolution, one can prove that the resulting evolution is equivalent to a random walk on the unitary group of $\Hcal$ (see \cite{A-D}).
\item For any density matrix $\omega$ on $\Kcal$ and any observable $X\in\Bcal(\Hcal)$, we have
\begin{equation}
\Lcal_{\omega}(X):=\Tr_\omega\left[U^* X\otimes I_\Kcal U\right]=\sum_{i=1}^{d}{\sca{\psi_i}{w\psi_i}U_i^*X U_i}.
\label{eqCPmap}
\end{equation}
The map $\Lcal_\omega$ is thus a \emph{random unitary} completely positive map (CP map), a kind of CP map which is largely study in Quantum Information Theory (see \cite{raey} for a review on that matter).
\\It has been conjecture in \cite{ADP2} by Attal and al. that a unitary operator that gives a random unitary CP map for all density matrices of the environment must have the form (where $\varphi_i$ is another orthonormal basis):
\begin{equation}
U=\sum_{i=1}^d{U_i\otimes\proj{\varphi_i}{\psi_i}}.
\label{eqexcom2}
\end{equation}
\end{itemize}
We will come back on these specific unitary operators in Section \ref{sect2}. Indeed we will show as a corollary of Theorem \ref{theoaction} that they correspond exactly to those that have a commutative \emph{Environment Right-Action Algebra} (see Definition \ref{deactionenvironment}). We will show that $U$ has the form (\ref{eqexcom2}) if and only if this algebra is commutative.

As we already mentioned in the introduction, the Spontaneous Emission, whose evolution is described by the unitary operator $U_{\text{se}}$ of Equation (\ref{eqSpontemiss}), is well-known for being a manifestation of the quantum world. We will see later that this can be interpreted in our framework as the fact that there does not exist a commutative algebra $\Acal$ on $\Kcal$ such that $U_{\text{se}}\in\Bcal(\Hcal)\otimes\Acal$. This can be read directly on the Environment Algebra for this evolution, as we will show that it is the whole algebra $\Bcal(\Kcal)$. Consequently both situations can be integrated in the framework of the Environment Algebra. 
\\The general case stands in between the classical case and the quantum case, as it can occur that some subpart only of the environment displays classical behavior. The second goal of this section is to define what is a classical part of the environment. Then, in the general case, we want to be able to identify all the classical parts of the environment. Once again this will be done using the Environment Algebra.

In Subsection \ref{sect11} below we define the \emph{Environment Algebra} $\Acal(U)$ as a relevant subalgebra of $\Bcal(\Kcal)$. We then use this algebra to properly define a commutative environment, and to characterize unitary operators with commutative environment. In Subsection \ref{sect12} we define the classical parts of the environment, namely the \emph{Commutative Subspaces of the Environment}, and we prove the existence of a maximal commutative subspace, that contains all the others. This provides us with a decomposition of the environment between a classical and a quantum part. In Subsection \ref{sect13} we study such a decomposition on one example derived from a typical Hamiltonian.

\label{sect1}
\subsection{Definition, first characterization and first examples}

Let $U$ be a unitary operator on $\Hcal\otimes\Kcal$. We will need the following notation: for $f,g\in\Hcal$, we define:

\begin{equation}
U(f,g)=\Tr_{\proj{g}{f}}[U],\ U^*(f,g)=\Tr_{\proj{g}{f}}[U^*].
\label{eqU(f,g)}
\end{equation}

Those operators can be seen as pictures of $U$ taken from $\Kcal$ but with different angles.

\begin{de}
Let $U$ be a unitary operator on $\Hcal\otimes\Kcal$. We call the \emph{Environment Algebra} the von Neumann algebra $\Acal(U)$ generated by the $U(f,g)$, that is
\begin{equation}
\Acal(U)=\left\lbrace U(f,g),\ U^*(f,g);\quad f,g\in\Hcal\right\rbrace''.
\label{eqdeenvironmentalg}
\end{equation}
\label{deenvironment}
\end{de}

\noindent The point with this definition is that it fits with the following characterization. 

\begin{prop}
Let $U$ be a unitary operator on $\Hcal\otimes\Kcal$. Then $\Acal(U)$ is the smallest von Neumann subalgebra of $\Bcal(\Kcal)$ such that $U,U^*\in\Bcal(\Hcal)\otimes\Acal(U)$, i.e. if $\Acal$ is another von Neumann algebra such that $U,U^*\in\Bcal(\Hcal)\otimes\Acal$, then $\Acal(U)\subset\Acal$. Furthermore, its commutant is given by
\begin{equation}
\Acal(U)'=\left\lbrace Y\in\Bcal(\Kcal),\quad [I_\Hcal\otimes Y,U]=[I_\Hcal\otimes Y,U^*]=0\right\rbrace.
\label{eqpropenvironment}
\end{equation}
\label{propenvironment}
\end{prop}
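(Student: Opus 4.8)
The strategy is to reduce everything to a single identity for the slice maps built from the partial traces $U(f,g)$. For $f,g\in\Hcal$ set $\Phi_{f,g}(X)=(\langle f|\otimes I_\Kcal)\,X\,(|g\rangle\otimes I_\Kcal)\in\Bcal(\Kcal)$; unwinding the definition of $\Tr_{\proj{g}{f}}$ one gets $U(f,g)=\Phi_{f,g}(U)$ and $U^*(f,g)=\Phi_{f,g}(U^*)$, so these slices are exactly the generators of $\Acal(U)$. The whole proposition will follow once the commutant formula (\ref{eqpropenvironment}) is established, so I would prove it first.

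Write $\Ncal$ for the algebra on the right-hand side of (\ref{eqpropenvironment}); it is a von Neumann algebra, being weakly closed, containing $I_\Kcal$, and $*$-stable since taking adjoints turns the two defining relations for $Y$ into those for $Y^*$. The elementary relations $(\langle f|\otimes I_\Kcal)(I_\Hcal\otimes Y)=Y(\langle f|\otimes I_\Kcal)$ and $(I_\Hcal\otimes Y)(|g\rangle\otimes I_\Kcal)=(|g\rangle\otimes I_\Kcal)Y$ yield, after subtraction, the key identity \[ \Phi_{f,g}\big([I_\Hcal\otimes Y,\,U]\big)=[\,Y,\,U(f,g)\,], \] and likewise with $U^*$. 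If $Y\in\Ncal$, the left-hand side vanishes for all $f,g$, so $Y$ commutes with every generator of $\Acal(U)$, i.e.\ $Y\in\Acal(U)'$. Conversely, if $Y\in\Acal(U)'$ then $\Phi_{f,g}([I_\Hcal\otimes Y,U])=0$ for all $f,g$; since $\Phi_{e_i,e_j}(Z)$ are precisely the operator blocks of $Z$ in an orthonormal basis $(e_i)$ of $\Hcal$, the family $\{\Phi_{f,g}\}$ separates the points of $\Bcal(\Hcal\otimes\Kcal)$, whence $[I_\Hcal\otimes Y,U]=0$, and similarly for $U^*$, so $Y\in\Ncal$. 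This proves $\Acal(U)'=\Ncal$.

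Next I would check that $\Acal(U)$ itself has the required property, $U,U^*\in\Bcal(\Hcal)\otimes\Acal(U)$. With the finite-rank projections $P_N=\sum_{i\le N}\proj{e_i}{e_i}$, the truncations $(P_M\otimes I_\Kcal)\,U\,(P_N\otimes I_\Kcal)=\sum_{i\le M,\,j\le N}\proj{e_i}{e_j}\otimes U(e_i,e_j)$ lie in $\Bcal(\Hcal)\otimes\Acal(U)$, are bounded by $\norm{U}$, and converge to $U$ in the strong operator topology as $M,N\to\infty$. Since the von Neumann tensor product $\Bcal(\Hcal)\otimes\Acal(U)$ is weakly closed, $U$ (and by the same argument $U^*$) belongs to it. Finally, minimality follows from the commutant formula: if $\Acal$ is any von Neumann subalgebra of $\Bcal(\Kcal)$ with $U,U^*\in\Bcal(\Hcal)\otimes\Acal$, then for $Y\in\Acal'$ the operator $I_\Hcal\otimes Y$ commutes with every elementary tensor $A\otimes B$ with $B\in\Acal$, hence with the von Neumann algebra they generate, $\Bcal(\Hcal)\otimes\Acal$, in particular with $U$ and $U^*$; thus $Y\in\Ncal=\Acal(U)'$. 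Therefore $\Acal'\subseteq\Acal(U)'$, and taking commutants gives $\Acal(U)\subseteq\Acal$.

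The computations are routine; the only genuine care is functional-analytic and concerns the infinite-dimensional setting. The delicate points are that $U$ really lies in the spatial von Neumann tensor product rather than in a merely formal block expansion (handled by the uniformly bounded, strongly convergent truncations above), that the slice maps separate points and are compatible with weak limits, and that $\Bcal(\Hcal)\otimes\Acal$ is weakly closed. In finite dimension all of this is automatic and the argument becomes purely algebraic.
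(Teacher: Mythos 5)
Your proposal is correct and follows essentially the same route as the paper's proof: establish $U,U^*\in\Bcal(\Hcal)\otimes\Acal(U)$ via the block (matrix) decomposition of $U$ over an orthonormal basis of $\Hcal$, derive the commutant formula by observing that slicing intertwines $[I_\Hcal\otimes Y,U]$ with $[Y,U(f,g)]$, and deduce minimality from $\Acal'\subset\Acal(U)'$ plus the Bicommutant Theorem. The only difference is cosmetic: you supply the strong-convergence/truncation details that the paper delegates to a citation.
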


\begin{proof}
First, if $(e_i)$ is an orthonormal basis of $\Hcal$, then $U$ has the matrix decomposition (see \cite{Attb}):
\begin{equation*}
U=\sum_{i,j}{\proj{e_i}{e_j}\otimes U(e_i,e_j)},
\end{equation*}
where the sum is strongly convergent if $\Hcal$ is infinite dimensional. As the same decomposition holds for $U^*$, we obtain that $U,U^*\in\Bcal(\Hcal)\otimes\Acal(U)$. Now for all $Y\in\Bcal(\Kcal)$,
\begin{align*}
[I_\Hcal\otimes Y,U]=0
& \Leftrightarrow \Tr_{\proj{g}{f}}\left[[I_\Hcal\otimes Y,U]\right]=0 \text{ for all }f,g\in\Hcal \\
& \Leftrightarrow \left[Y,\Tr_{\proj{g}{f}}\left[U\right]\right]=0 \text{ for all }f,g\in\Hcal \\
& \Leftrightarrow [Y,U(f,g)]=0 \text{ for all }f,g\in\Hcal.
\end{align*}
Similarly $[I_\Hcal\otimes Y,U^*]=0$ if and only if $[Y,U^*(f,g)]=0$ for all $f,g\in\Hcal$ which proves Equality (\ref{eqpropenvironment}).
\\Now suppose that $U,U^*\in\Bcal(\Hcal)\otimes\Acal$ for some von Neumann subalgebra $\Acal$ of $\Bcal(\Kcal)$. Then for all $Y\in\Acal'$, $[I_\Hcal\otimes Y,U]=[I_\Hcal\otimes Y,U^*]=0$ so that by the previous equality $Y\in\Acal(U)'$. Consequently $\Acal'\subset\Acal(U)'$ and then $\Acal(U)\subset\Acal$ by the Bicommutant Theorem.
\end{proof}

We now show how those definitions apply to the two examples mentioned in the introduction: the general situation of a commutative algebra and the particular case of the spontaneous emission.

\paragraph{The case of a Commutative Environment:}
In this paragraph we characterize the unitary operators with commutative Environment Algebra. The expression in the general case is of the same form as Equation (\ref{eqexcom1}), where the sum has to be replaced by an integral over a spectral measure. Note that in general, if a von Neumann algebra $\Acal$ on $\Kcal$ is commutative, then there exist a measured space $(\Omega,\Fcal)$ and a spectral measure $\xi$ on $(\Omega,\Fcal)$ with values in the orthogonal projections of $\Kcal$ such that:

\begin{equation}
\Acal=\left\{\int_\Omega{f(\omega)\xi(d\omega)},\quad f\in L^\infty(\Omega,\Fcal)\right\}.
\label{eqalgcom}
\end{equation}

\noindent As $U\in\Bcal(\Hcal)\otimes\Acal(U)$, the goal is to understand the nature of this last algebra. The next proposition shows that it consists of operators of the form (\ref{eqalgcom}) where the function $f$ has to be replaced by a measurable family of bounded operators on $\Hcal$.

\begin{prop}
Let $U$ be a unitary operator on $\Hcal\otimes\Kcal$. Then $\Acal(U)$ is commutative if and only if there exist:
\begin{itemize}
\item a measured space $(\Omega,\Fcal)$,
\item a spectral measure $\xi$ on $(\Omega,\Fcal)$ with values in the orthogonal projections of $\Kcal$,
\item a $\Fcal$-measurable family of $\xi$-almost surely unitary operators
\\$(V(\omega))_{\omega\in\Omega}$ on $\Hcal$ such that:
\begin{equation}
U= \int_{\Omega}{V(\omega)\otimes \xi(d\omega)}.
\label{eqpropenvcom}
\end{equation}
\end{itemize}
\label{propenvcom}
\end{prop}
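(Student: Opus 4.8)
The plan is to prove both implications separately, with the backward direction being a short application of Proposition \ref{propenvironment} and the forward direction resting on the direct integral representation of commutative von Neumann algebras.

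First I would dispose of the easy implication. Suppose $U=\int_\Omega V(\omega)\otimes\xi(d\omega)$ for a spectral measure $\xi$ and a $\xi$-a.s. unitary family $V(\omega)$, and let $\Acal$ be the commutative algebra generated by $\xi$, i.e. $\Acal=\{\int_\Omega f(\omega)\xi(d\omega):f\in L^\infty(\Omega,\Fcal)\}$ as in (\ref{eqalgcom}). By construction $U\in\Bcal(\Hcal)\otimes\Acal$, and since the $\xi(d\omega)$ are projections (hence self-adjoint) and $V(\omega)^*$ exists a.s., we also have $U^*=\int_\Omega V(\omega)^*\otimes\xi(d\omega)\in\Bcal(\Hcal)\otimes\Acal$. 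The minimality assertion of Proposition \ref{propenvironment} then gives $\Acal(U)\subset\Acal$, so $\Acal(U)$, being a subalgebra of a commutative algebra, is commutative.

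For the converse, assume $\Acal(U)$ is commutative. By the structure theorem (\ref{eqalgcom}) there exist $(\Omega,\Fcal)$ and a spectral measure $\xi$ with $\Acal(U)=\{\int_\Omega f(\omega)\xi(d\omega):f\in L^\infty(\Omega,\Fcal)\}$. I would then realize $\Kcal$ as a direct integral $\Kcal=\int_\Omega^\oplus\Kcal_\omega\,d\mu(\omega)$, for a scalar measure $\mu$ in the measure class of $\xi$, in which $\Acal(U)$ becomes the algebra of diagonalizable operators, acting on the fiber $\Kcal_\omega$ by the scalar $f(\omega)$; correspondingly $\Hcal\otimes\Kcal=\int_\Omega^\oplus(\Hcal\otimes\Kcal_\omega)\,d\mu(\omega)$. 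The central step is to identify $\Bcal(\Hcal)\otimes\Acal(U)$ inside $\Bcal(\Hcal\otimes\Kcal)$. By the commutation theorem for von Neumann tensor products, $\Bcal(\Hcal)\otimes\Acal(U)=(I_\Hcal\otimes\Acal(U)')'$, where $\Acal(U)'$ is the algebra of decomposable operators $\int_\Omega^\oplus Y(\omega)\,d\mu(\omega)$. Any $T$ commuting with $I_\Hcal\otimes\Acal(U)'$ commutes in particular with the diagonalizable operators and is therefore itself decomposable, $T=\int_\Omega^\oplus T(\omega)\,d\mu(\omega)$, with $T(\omega)$ commuting on almost every fiber with $I_\Hcal\otimes\Bcal(\Kcal_\omega)=(\Bcal(\Hcal)\otimes I_{\Kcal_\omega})'$; hence $T(\omega)=V(\omega)\otimes I_{\Kcal_\omega}$ for some $V(\omega)\in\Bcal(\Hcal)$. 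Applying this to $U$ yields $U=\int_\Omega^\oplus V(\omega)\otimes I_{\Kcal_\omega}\,d\mu(\omega)$, which is precisely $U=\int_\Omega V(\omega)\otimes\xi(d\omega)$. Finally the relations $U^*U=UU^*=I$ read fiberwise as $V(\omega)^*V(\omega)=V(\omega)V(\omega)^*=I_\Hcal$ for $\mu$-almost every $\omega$, so $V(\omega)$ is $\xi$-a.s. unitary.

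The main obstacle I anticipate is the measurability bookkeeping in the forward direction: one must verify that the fiber family $\omega\mapsto V(\omega)$ is genuinely $\Fcal$-measurable and that the fiberwise commutation theorem holds simultaneously for almost every $\omega$, rather than merely for each fixed fiber. This is exactly what the standard theory of decomposable operators on direct integrals supplies, and in the separable setting it causes no essential difficulty, but it is where the analytic care is concentrated. The algebraic heart of the argument, namely the identification of $\Bcal(\Hcal)\otimes\Acal(U)$ with the decomposable operators whose fibers lie in $\Bcal(\Hcal)\otimes I_{\Kcal_\omega}$, is simply the direct-integral analogue of the discrete computation leading to (\ref{eqexcom1}).
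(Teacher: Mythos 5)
Your proof is correct, but it follows a genuinely different route from the paper's. You both reduce the problem to identifying $\Bcal(\Hcal)\otimes\Acal(U)$ with the operators that are ``fiberwise of the form $V(\omega)\otimes I$,'' but you get there via von Neumann's reduction theory: you realize $\Kcal$ as a direct integral in which $\Acal(U)$ is the diagonalizable algebra, invoke the (elementary, since one factor is $\Bcal(\Hcal)$) commutation theorem to write $\Bcal(\Hcal)\otimes\Acal(U)=(I_\Hcal\otimes\Acal(U)')'$, and then use the standard theorems that commutants of diagonalizable and decomposable algebras decompose fiberwise. The paper instead builds the object $\int_\Omega A\otimes d\xi$ from scratch: it defines it by a sesquilinear form through the complex measures $\nu_{\varphi,\psi}$ and the Riesz representation theorem, observes that such operators are weak limits of finite sums $\sum_j A_j\otimes\xi(E_j)$, and proves the two inclusions $\Bcal(\xi)\subset\Bcal(\Hcal)\otimes\Acal(U)$ (by a commutant argument on simple sums) and $\Bcal(\Hcal)\otimes\Acal(U)\subset\Bcal(\xi)$ (by exhibiting $X\otimes A$ as $\int f(\omega)X\,\xi(d\omega)$) directly. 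Your approach is conceptually cleaner and delegates all the measurability bookkeeping you flag to the standard theory of decomposable operators, at the price of requiring separability of $\Kcal$ (which the paper does assume) and of one extra routine verification you should make explicit: that the direct-integral operator $\int^\oplus V(\omega)\otimes I_{\Kcal_\omega}\,d\mu(\omega)$ coincides with the operator $\int_\Omega V(\omega)\otimes\xi(d\omega)$ as defined by the sesquilinear form in the statement. The paper's construction is more self-contained and simultaneously produces the almost-sure uniqueness of the integrand, which it uses to deduce that $V(\omega)$ is $\xi$-a.s.\ unitary from $UU^*=U^*U=I$, exactly as you do fiberwise. A further small point in your favour: you spell out the easy backward implication via the minimality clause of Proposition \ref{propenvironment}, which the paper leaves implicit.
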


If $\Kcal$ is finite dimensional, then Equation (\ref{eqpropenvcom}) takes the simpler form of Equation (\ref{eqexcom1}).

\begin{proof}
If $\Acal(U)$ is commutative then it is of the same form as in Equation (\ref{eqalgcom}). 
\\The first step of the proof is to construct the kind of operators that appear in Equation (\ref{eqpropenvcom}). Let $B(\Omega)$ denote the set of $\xi$-almost bounded families of bounded operators on $\Hcal$ indexed by $\Omega$. Thus $A\in B(\Omega)$ is a random variable on $(\Omega,\Fcal)$ with value in $\Bcal(\Hcal)$ and such that there exists a constant $C>0$ with $\norm{A}<C$, $\xi$-almost surely.
\\We want to integrate with respect to $\xi$ the elements of $B(\Omega)$. For $\varphi,\psi\in\Kcal$, define the complex measure $\nu_{\varphi,\psi}$ on $(\Omega,\Fcal)$ as
\begin{equation*}
\nu_{\varphi,\psi}:E\in\Fcal\mapsto \sca{\varphi}{\xi(E)\psi}.
\end{equation*}
Then for all $A\in B(\Omega)$ and all $f,g\in\Hcal$, we have
\begin{align*}
\left|\int_\Omega{\sca{f}{A(\omega)g}\nu_{\varphi,\psi}(d\omega)}\right|
& \leq \int_\Omega{\left|\sca{f}{A(\omega)g}\right|\left|\sca{\varphi}{\xi(d\omega)\psi}\right|} \\
& \leq \norm{A}\norm{f}\norm{g}\norm{\varphi}\norm{\psi}.
\end{align*}
By Riesz representation Theorem \cite{R-S} this defines a bounded operator on $\Hcal\otimes\Kcal$ that we write $\int_\Omega{A\otimes d\xi}$ and such that
\begin{equation}
\sca{f\otimes\varphi}{\left(\int_\Omega{A\otimes d\xi}\right)g\otimes\psi}=\int_\Omega{\sca{f}{A(\omega)g}\nu_{\varphi,\psi}(d\omega)}.
\label{eqproofenvcom}
\end{equation}
We call $\Bcal(\xi)$ the set of operators defined by Equation (\ref{eqproofenvcom}). Note that elements of $\Bcal(\xi)$ can also be defined as weak limits of operators of the form
\begin{equation}
\sum_{j\in J}{A_j\otimes\xi(E_j)},
\label{eqproofenvcom2}
\end{equation}
where $J$ is a finite set, $(A_j)_{j\in J}$ is a family in $\Bcal(\Hcal)$ and $(E_j)_{j\in J}$ is a measurable partition of $(\Omega,\Fcal)$.
From this remark and Equation (\ref{eqproofenvcom}), it is an easy exercise to check that $\Bcal(\xi)$ is an algebra and that it is closed for the weak topology, so that it is a von Neumann algebra on $\Hcal\otimes\Kcal$. Besides, if $\int_\Omega{A\otimes d\xi}=\int_\Omega{B\otimes d\xi}$ for $A,B\in\Bcal(\xi)$, then $A=B$, $\xi$-almost surely.
\\Note that if there exists $V\in B(\Omega)$ such that $U=\int_\Omega{V\otimes d\xi}$, then, from $UU^*=U^*U=I_{\Hcal\otimes\Kcal}$, we get $V{V}^*={V}^*V=I_{\Hcal}$, $\xi$-almost surely, which shows that the $V(\omega)$ are $\xi$-almost surely unitary operators on $\Hcal$. Consequently, as $U$ belongs to $\Bcal(\Hcal)\otimes\Acal(U)$, in order to complete the proof we only have to show that
\begin{equation*}
\Bcal(\xi)=\Bcal(\Hcal)\otimes\Acal(U).
\end{equation*}
We first prove that $\Bcal(\xi)\subset\Bcal(\Hcal)\otimes\Acal(U)$. By the Bicommutant Theorem, this is equivalent to $I_\Hcal\otimes\Acal(U)'\subset\Bcal(\xi)'$. Let $Y\in\Acal(U)'$. It is enough to prove that $I_\Hcal\otimes Y$ commutes with elements of the form (\ref{eqproofenvcom2}), which is straightforward.
\\To prove the other inclusion, take $X\in\Bcal(\Hcal)$ and $A\in\Acal(U)$. Then there exists a bounded measurable function $f$ on $(\Omega,\Fcal)$ such that $A=\int_\Omega{fd\xi}$. Then
\begin{equation*}
X\otimes A=\int_\Omega{(f(\omega)X)\ \xi(d\omega)}\in\Bcal(\xi).
\end{equation*}
This concludes the proof as the von Neumann algebra $\Bcal(\Hcal)\otimes\Acal$ is the weak closure of operators of the form $X\otimes A$. 
\end{proof}

\paragraph{An example of purely quantum environment: the Spontaneous Emission:}
Consider the unitary operator $U_{\text{se}}$ given by Equation (\ref{eqSpontemiss}). We suppose that $\theta\notin 2\pi\Zbb$ so that $U_{\text{se}}$ is not the identity operator. By common knowledge it is a purely quantum evolution, so the corresponding environment algebra should not be commutative. More particularly, as $\Kcal=\C^2$, it should be the whole algebra.
\\In order to check this, take an operator $Y\in\Acal(U_{\text{se}})'$. Let $(e_0,e_1)$ be the canonical basis of $\C^2$ and identify $Y$ with the matrix $\begin{pmatrix} a & b \\ c & d \end{pmatrix}$, $a,b,c,d\in\C$. The operator $I_\Hcal\otimes Y$ is identified with the $4\times4$-matrix $\begin{pmatrix} aI_\Hcal & bI_\Hcal \\ cI_\Hcal & dI_\Hcal \end{pmatrix}$. Computing $[I_\Hcal\otimes Y,U_{\text{se}}]$ leads to:
\begin{align*}
& [I_\Hcal\otimes Y,U_{\text{se}}] = \\
& \begin{pmatrix}
0 & b\sin\theta & b(\cos\theta-1) & 0 \\
c\sin\theta & 0 & (d-a)\sin\theta & b(1-\cos\theta) \\
c(1-\cos\theta) & (d-a)\sin\theta & 0 & -b\sin\theta \\
0 & c(\cos\theta-1) & -c\sin\theta & 0
\end{pmatrix}.
\end{align*}
Hence the condition $[I_\Hcal\otimes Y,U_{\text{se}}]=0$ implies that $b=c=0$ and $a=d$, so that $Y=aI_\Kcal$. Thus $\Acal(U_{\text{se}})'=\C I_\Kcal$ and consequently
\begin{equation*}
\Acal(U_{\text{se}})=\Bcal(\C^2).
\end{equation*}

\label{sect11}
\subsection{Classical and Quantum parts of the Environment}

We now focus on our main topic. We have already given some examples of the algebra $\Acal(U)$ and emphasized on the particular case where it is commutative. In this subsection we state our main result for a general one-step unitary evolution: the environment can be splited into the sum of a classical and a quantum part. First using the algebra $\Acal(U)$ it is now possible to properly define what we call a classical part, or commutative part, of the environment. For instance, consider the following unitary operator $U_{\text{ex}}$ on $\C^2\otimes\C^4$ ($\alpha,\beta\in\R$, $\alpha\ne\beta$), written in the canonical orthonormal basis $(e_1,e_2,e_3,e_4)$ of $\Kcal=\C^4$:

\begin{equation}
U_{\text{ex}}=\begin{pmatrix}
\cos\alpha & -\sin\alpha & 0 & 0 & 0 & 0 & 0 & 0 \\
\sin\alpha & \cos\alpha & 0 & 0 & 0 & 0 & 0 & 0 \\
0 & 0 & \cos\beta & -\sin\beta & 0 & 0 & 0 & 0 \\
0 & 0 & \sin\beta & \cos\beta & 0 & 0 & 0 & 0 \\
0 & 0 & 0 & 0 & 1 & 0 & 0 & 0 \\
0 & 0 & 0 & 0 & 0 & \cos\theta & -\sin\theta & 0 \\
0 & 0 & 0 & 0 & 0 & \sin\theta & \cos\theta & 0 \\
0 & 0 & 0 & 0 & 0 & 0 & 0 & 1 
\end{pmatrix}.
\label{eqexunitaire}
\end{equation}

\noindent Clearly the environment is the sum of two parts $\Kcal_c=\C e_1\oplus \C e_2$ and $\Kcal_q=\C e_3\oplus \C e_4$, such that $\Hcal\otimes\Kcal_{c,q}$ are stable by $U$ and:

\begin{itemize}
\item The restriction $U_c$ of $U$ on $\Hcal\otimes\Kcal_c$ has a commutative environment. Indeed:
\begin{align*}
U_c & =\begin{pmatrix}
\cos\alpha & -\sin\alpha & 0 & 0 \\
\sin\alpha & \cos\alpha & 0 & 0  \\
0 & 0 & \cos\beta & -\sin\beta \\
0 & 0 & \sin\beta & \cos\beta 
\end{pmatrix} \\
& =\begin{pmatrix}\cos\alpha & -\sin\alpha \\ \sin\alpha & \cos\alpha \end{pmatrix}\otimes\proj{e_1}{e_1}
+\begin{pmatrix}\cos\beta & -\sin\beta \\ \sin\beta & \cos\beta \end{pmatrix}\otimes\proj{e_2}{e_2},
\end{align*}
so that $\Acal(U_c)=\C\proj{e_1}{e_1}+\C\proj{e_2}{e_2}=\left\{\begin{pmatrix}a&0\\0&b\end{pmatrix},\ a,b\in\C\right\}$ as $\alpha\ne\beta$.
\item The restriction $U_q$ of $U$ on $\Hcal\otimes\Kcal_q$ has a quantum environment, as $U_q=U_{\text{se}}$ so that $\Acal(U_q)=\Bcal(\Kcal_q)$.
\end{itemize}

\noindent Consequently in this example $\Kcal_c$ is a classical subspace of the environment, which is characterized by the fact that $\Hcal\otimes\Kcal_c$ reduces both $U$ and $U^*$ and that $\Acal(U_c)$ is commutative. Furthermore the other non-trivial parts of the environment that could be considered as classical, that is $\C e_1$ and $\C e_2$, are subspaces of $\Kcal_c$.
\\This leads us to the following definition.

\begin{de}
Let $U$ be a unitary operator on $\Hcal\otimes\Kcal$ and let $\tilde{\Kcal}$ be a subspace of $\Kcal$. We say that $\tilde{\Kcal}$ is a \emph{Commutative Subspace of the Environment} if $\tilde{\Kcal}\ne\{0\}$ and:
\begin{itemize}
\item $\Hcal\otimes\tilde{\Kcal}$ and $\Hcal\otimes\tilde{\Kcal}^\perp$ are stable by $U$,
\item $\Acal(\tilde{U})$ is commutative, where $\tilde{U}$ is the restriction of $U$ to $\tilde{\Kcal}$.
\end{itemize}
\label{ClassicalQuantumpart}
\end{de}

Our main result, Theorem \ref{theodecompenv}, states that there exists a maximal commutative subspace $\Kcal_c$ of the environment, in the sense that all commutative subspaces of the environment are also subspaces of $\Kcal_c$. The main ingredient of the proof is the following proposition.

\begin{prop}
Let $\Acal$ be a von Neumann subalgebra of $\Bcal(\Kcal)$. Then there exists a unique projection $P_c\in\Acal'$, possibly null, such that:
\begin{enumerate}
\item $P_c\ \Acal\ P_c$ is commutative;
\item if $P\in\Acal'$ is an orthogonal projection such that $P \Acal\ P$ is commutative, then $P\leq P_c$.
\end{enumerate}
\label{propdecompaction}
\end{prop}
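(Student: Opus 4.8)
The plan is to turn the commutativity requirement into a linear annihilation condition on the projection, and then to produce $P_c$ directly as a supremum in the projection lattice of $\Acal'$, thereby avoiding any appeal to Zorn's lemma.

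First I would unwind what condition (1) means. If $P\in\Acal'$ is a projection then $P$ commutes with every $A\in\Acal$, so for $A,B\in\Acal$ one computes $(PAP)(PBP)=PABP$ and $(PBP)(PAP)=PBAP$. Hence $P\Acal P$ is commutative if and only if $P[A,B]P=0$ for all $A,B$, and since $P$ commutes with the commutator $[A,B]\in\Acal$ this is the same as
\begin{equation*}
[A,B]\,P=0\qquad\text{for all }A,B\in\Acal .
\end{equation*}
The gain is that this is now a \emph{family of linear} conditions on $P$, equivalently $\mathrm{ran}(P)\subseteq\ker[A,B]$ for every commutator $[A,B]$.

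Next I would introduce the set $S$ of all projections $P\in\Acal'$ meeting this condition; it contains $0$, which is exactly what leaves room for $P_c$ to be null. The heart of the argument is that $S$ is stable under arbitrary suprema in the complete projection lattice of the von Neumann algebra $\Acal'$: for a family $(P_i)$ in $S$ with join $P=\bigvee_i P_i\in\Acal'$, the range of $P$ is the closed linear span of the ranges of the $P_i$, and since each such range sits inside the \emph{closed} subspace $\ker[A,B]$, so does their closed span; thus $[A,B]\,P=0$ and $P\in S$. Taking $P_c=\bigvee_{P\in S}P$ then yields an element of $S$, which is condition (1), that dominates every member of $S$, which is condition (2); and uniqueness follows at once by applying the maximality of two such projections to one another.

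The step deserving the most care is precisely this passage to the supremum: one must use that the join of projections of $\Acal'$ again lies in $\Acal'$ and that its range is exactly the closed span of the individual ranges, together with the closedness of each kernel $\ker[A,B]$. Everything else—the algebraic reduction of (1) and the uniqueness—is then routine. I do not expect any difficulty specific to the infinite-dimensional setting, since completeness of the projection lattice holds for a general von Neumann algebra.
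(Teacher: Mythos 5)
Your proof is correct and takes essentially the same route as the paper's: both define $P_c$ as the supremum, in the complete projection lattice of $\Acal'$, of the set of projections $P\in\Acal'$ for which $P\Acal P$ is commutative, and then observe that this supremum still lies in that set. The paper simply asserts the latter point, whereas your reduction of commutativity to the linear condition $[A,B]P=0$ (equivalently $\mathrm{ran}(P)\subseteq\ker[A,B]$) together with the fact that the join's range is the closed span of the individual ranges supplies exactly the justification the paper leaves implicit.
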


\noindent This proposition is not a new result on von Neumann algebras, however we could not find it stated in this particular form in the literature. It is true for any von Neumann algebra and entirely relies on the fact that the supremum of a subclass of orthogonal projections in a von Neumann algebra still belongs to this algebra.

\begin{proof}
Define the set $\Pcal_c$ of orthogonal projections in $\Acal'$ such that $P\in\Pcal_c$ iff $P\Acal$ is commutative. Take $P_c=\sup\Pcal_c$ the supremum over $\Pcal_c$. It is again in $\Pcal_c$ and it is easy to see that it fulfills the proposition.
\end{proof}

\noindent Our result is a direct corollary of this proposition.

\begin{theo}
The environment Hilbert space $\Kcal$ is the orthogonal direct sum of two subspaces $\Kcal_c$ and $\Kcal_q$, such that either $\Kcal_c=\{0\}$ or:
\begin{itemize}
\item$\Kcal_c$ is a commutative subspace of the environment.
\item If $\tilde{\Kcal}$ is any commutative subspace of the environment then $\tilde{\Kcal}$ is a subspace of $\Kcal_c$.
\item The restriction of $U$ to $\Hcal\otimes\Kcal_q$ does not have any commutative subspace.
\end{itemize}
\label{theodecompenv}
\end{theo}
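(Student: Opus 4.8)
The plan is to apply Proposition \ref{propdecompaction} to the Environment Algebra $\Acal(U)$ itself: letting $P_c\in\Acal(U)'$ be the maximal projection it provides, I would set $\Kcal_c=P_c\Kcal$ and $\Kcal_q=(I_\Kcal-P_c)\Kcal$. Everything then reduces to translating the notion of a \emph{commutative subspace of the environment} into a statement about projections in $\Acal(U)'$. Concretely, I would first establish the following dictionary: a subspace $\tilde{\Kcal}$ with associated orthogonal projection $P$ is a commutative subspace of the environment if and only if $P$ is a nonzero projection lying in $\Acal(U)'$ such that $P\,\Acal(U)\,P$ is commutative.

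To prove this dictionary, I would handle the two defining conditions separately. For the stability condition, the observation is that $\Hcal\otimes\tilde{\Kcal}$ and $\Hcal\otimes\tilde{\Kcal}^\perp$ are both stable by $U$ exactly when $I_\Hcal\otimes P$ commutes with $U$; since $U$ is unitary and $P$ self-adjoint this is equivalent to $[I_\Hcal\otimes P,U]=[I_\Hcal\otimes P,U^*]=0$, which by the characterization (\ref{eqpropenvironment}) of Proposition \ref{propenvironment} means precisely $P\in\Acal(U)'$. For the algebra condition, I would use the matrix decomposition $U=\sum_{i,j}\proj{e_i}{e_j}\otimes U(e_i,e_j)$ to compute the pictures of the restriction $\tilde{U}=(I_\Hcal\otimes P)U(I_\Hcal\otimes P)$, obtaining $\tilde{U}(e_i,e_j)=P\,U(e_i,e_j)\,P$ on $\tilde{\Kcal}$. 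Since $P\in\Acal(U)'$, the reduction $A\mapsto PAP=AP$ is a normal $*$-homomorphism, so the generators of $\Acal(\tilde{U})$ generate $P\,\Acal(U)\,P$ and hence $\Acal(\tilde{U})=P\,\Acal(U)\,P$. In particular $\Acal(\tilde{U})$ is commutative if and only if $P\,\Acal(U)\,P$ is, which completes the dictionary.

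With the dictionary in hand, the first two bullet points follow at once. Applying Proposition \ref{propdecompaction} to $\Acal(U)$ yields the projection $P_c\in\Acal(U)'$ with $P_c\,\Acal(U)\,P_c$ commutative and maximal among such projections. If $P_c=0$ we are in the first alternative. Otherwise $\Kcal_c=P_c\Kcal$ is, by the dictionary, a commutative subspace of the environment, which is the first bullet; and if $\tilde{\Kcal}$ is any commutative subspace, its projection $P$ satisfies $P\in\Acal(U)'$ with $P\,\Acal(U)\,P$ commutative, so $P\leq P_c$ by maximality and thus $\tilde{\Kcal}\subset\Kcal_c$, which is the second bullet.

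I expect the main obstacle to be the third bullet, i.e.\ that $U_q:=U|_{\Hcal\otimes\Kcal_q}$ has no commutative subspace; here I would invoke the reduction theorem for commutants. Writing $P_q=I_\Kcal-P_c\in\Acal(U)'$, the dictionary applied to $U_q$ identifies $\Acal(U_q)$ with $P_q\,\Acal(U)\,P_q$. Suppose for contradiction that $\tilde{\Kcal}\subset\Kcal_q$ were a commutative subspace of $U_q$, with nonzero projection $Q\leq P_q$. The algebra condition gives that $Q\,\Acal(U_q)\,Q$ is commutative, and since $QP_q=Q$ this equals $Q\,\Acal(U)\,Q$. The stability condition gives $Q\in\Acal(U_q)'$, and by the reduction theorem $\big(P_q\,\Acal(U)\,P_q\big)'=P_q\,\Acal(U)'\,P_q$ inside $\Bcal(\Kcal_q)$; since $P_q\in\Acal(U)'$, every element of $P_q\,\Acal(U)'\,P_q$ already lies in $\Acal(U)'$, so $Q\in\Acal(U)'$. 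Thus $Q$ is a nonzero projection in $\Acal(U)'$ with $Q\,\Acal(U)\,Q$ commutative, whence $Q\leq P_c$ by maximality. Combined with $Q\leq P_q=I_\Kcal-P_c$ this forces $Q=QP_c=0$, a contradiction. Therefore $U_q$ admits no commutative subspace, completing the proof.
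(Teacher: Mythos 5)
Your proof is correct and takes essentially the same route as the paper: apply Proposition \ref{propdecompaction} to $\Acal(U)$, set $\Kcal_c=P_c\Kcal$, and reduce everything to the maximality of $P_c$. The only difference is one of detail: you make explicit the dictionary between commutative subspaces and projections $P\in\Acal(U)'$ with $P\,\Acal(U)\,P$ commutative (via $\tilde{U}(f,g)=P\,U(f,g)\,P$ and the commutant reduction theorem for the third bullet), steps the paper's proof asserts with little justification.
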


\begin{proof}
We take $\Kcal_c=P_c \Kcal$ and $\Kcal_q=(I_\Kcal-P_c)\Kcal=\Kcal_c^\perp$, where $P_c$ is defined by applying Proposition \ref{propdecompaction} to $\Acal(U)$. 
\\Suppose that $P_c\ne0$. First we check that $\Kcal_c$ is a commutative subspace of the environment. By definition $P_c\in\Acal(U)'$ so that $I_\Hcal\otimes P_c\in\left(\Bcal(\Hcal)\otimes\Acal(U)\right)'$. As $U$ and $U^*$ are elements of $\Bcal(\Hcal)\otimes\Acal(U)$, this shows that $\Hcal\otimes\Kcal_c$ and $\Hcal\otimes\Kcal_c^\perp$ are stable by $U$. Denote by $U_c$ the restriction of $U$ to $\Hcal\otimes\Kcal_c$. Then $\Acal(U_c)=P_c\Acal(U)P_c$, which is commutative by definition of $P_c$.
\\Let $\tilde{\Kcal}$ be a commutative subspace of $U$ and denote by $P$ the orthogonal projection on $\tilde{\Kcal}$ and by $\tilde{U}$ the restriction of $U$ to $\tilde{\Kcal}$. By definition $P\in\Acal(U)'$ and $\Acal(\tilde{U})=P\Acal(U) P$ is commutative, so $P\leq P_c$ by Proposition \ref{propdecompaction}. Consequently $\tilde{\Kcal}$ is a subspace of $\Kcal_c$.
\\Now denote by $U_q$ the restriction of $U$ to $\Hcal\otimes\Kcal_q$. By contradiction, let $\Kcal'$ be a commutative subspace of $U_q$. By the same argument as before $\Kcal'$ is also a commutative subspace of $U_q$. Consequently $\Kcal'$ is a subspace of $\Kcal_c$, so that $\tilde{\Kcal}=\{0\}$, which is a contradiction.
\end{proof}

\label{sect12}
\subsection{Typical Hamiltonian: the Dipole Hamiltonian}

In this section, $\Hcal$ is a $N$-dimensional Hilbert space and $\Kcal$ is a $(d+1)$-dimensional Hilbert space, with $N,d$ two positive integers. If $U$ is a unitary operator on $\Hcal\otimes\Kcal$ it is always possible to write it $U=\exp{(-itH)}$ for some selfadjoint operator $H$ (which is not unique). Then it is interesting to study the algebra $\Acal(H)$ instead. It is also a first step towards understanding the continuous case, as the Hamiltonian describes the instantaneous evolution of the system and its environment (see \cite{A-P1}). We focus on the particular case of a dipole Hamiltonian.

Let $(e_i)_{i=0,...,d}$ be an orthonormal basis of $\Kcal$, starting the index at $0$. The vector $e_0$ will play a specific role in what follows. We write $\Kcal'=\C e_0^\perp$ its orthogonal subspace. We will often identify operators on $\Kcal$ with $(d+1)$-square matrices, and operators on $\Kcal'$ with $d$-square matrices, using this basis. We also write $a_j^i=\proj{e_j}{e_i}$ the elementary matrices. Writing $V_j^i=\Tr_\Hcal\left[V I_\Hcal\otimes\proj{e_i}{e_j}\right]$, any element $V\in\Bcal(\Hcal\otimes\Kcal)$ can then be written as a block matrix

\begin{equation*}
V=\sum_{i=0}^{d+1}{V_j^i\otimes a_j^i}=\begin{pmatrix}
V_0^0 & V_0^1 & \cdots & V_0^d \\
V_1^0 & V_1^1 & \cdots & V_1^d \\
\vdots & \vdots & & \vdots\\
V_d^0 & V_d^1 & \cdots & V_d^d
\end{pmatrix}.
\end{equation*}

\noindent In the same way elements of $\Bcal(\Hcal)^{\oplus d}$ can be seen as columns whose components are operators on $\Hcal$.
\\Similarly elements of the dual $\left(\Bcal(\Hcal)^{\oplus d}\right)^*$ of $\Bcal(\Hcal)^{\oplus d}$ can be seen as rows whose components are operators on $\Hcal$. If $L\in\Bcal(\Hcal)^{\oplus d}$, we write $L^T$ and $L^*$ the following elements of $\left(\Bcal(\Hcal)^{\oplus d}\right)^*$:

\begin{equation*}
L=\begin{pmatrix}
L_1 \\ \vdots \\ L_d
\end{pmatrix},\quad L^T=\begin{pmatrix}
L_1 & \cdots & L_d
\end{pmatrix},\quad L^*=\begin{pmatrix}
L_1^* & \cdots & L_d^*
\end{pmatrix}.
\end{equation*}

\noindent Elements $M=(m_{i,j})_{1\leq i,j\leq d}$ of $\Bcal(\Kcal')$ act on $\Bcal(\Hcal)^{\oplus d}$ in the following way:

\begin{equation*}
M: L \mapsto \begin{pmatrix}
m_{1,1}I_\Hcal & \cdots & m_{1,d}I_\Hcal \\
\vdots & & \vdots\\
m_{d,1}I_\Hcal & \cdots & m_{d,d}I_\Hcal
\end{pmatrix}\begin{pmatrix}
L_1 \\ \vdots \\ L_d
\end{pmatrix}.  
\end{equation*}

\noindent We write this action $M\star L$, with dual action on the dual $\left(\Bcal(\Hcal)^{\oplus d}\right)^*$ of $\Bcal(\Hcal)^{\oplus d}$:

\begin{equation*}
L^T\star M=\begin{pmatrix}
L_1 & \cdots & L_d
\end{pmatrix}\begin{pmatrix}
m_{1,1}I_\Hcal & \cdots & m_{1,d}I_\Hcal \\
\vdots & & \vdots\\
m_{d,1}I_\Hcal & \cdots & m_{d,d}I_\Hcal
\end{pmatrix}.
\end{equation*}

Our result makes use of the following lemma, which is straightforward by a simple computation on block-matrices.

\begin{lem}
Let $W$ be a unitary operator on $\Kcal'$ and write $\overset{\longrightarrow}{W}=a_0^0\oplus W$. Thus $\overset{\longrightarrow}{W}$ is the unitary operator on $\Kcal$ that acts as identity on $\C e_0$ and as $W$ on $\Kcal'$. Let $V\in\Bcal(\Hcal\otimes\Kcal)$. We write $L_0=(V_0^1\cdots V_0^d)$, $L^0=(V_1^0 \cdots V_d^0)^T$ and $\mathbb{L}=(V_j^i)_{1\leq i,j\leq d}$, so that:
\begin{equation*}
V=\begin{pmatrix}
V^0_0 & L_0 \\
L^0 & \mathbb{L}
\end{pmatrix}.
\end{equation*}
\noindent Then
\begin{equation*}
\left(I_\Hcal\otimes\overset{\longrightarrow}{W}^*\right)\ V\ \left(I_\Hcal\otimes\overset{\longrightarrow}{W}\right) =\begin{pmatrix}
V^0_0 & L_0\star W \\
W^*\star L^0 & \left(I_\Hcal\otimes W^*\right)\mathbb{L}\left(I_\Hcal\otimes W\right)
\end{pmatrix}.
\end{equation*}
\label{lemunittransfom}
\end{lem}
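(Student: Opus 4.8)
The plan is to reduce everything to a block-matrix computation with respect to the orthogonal splitting $\Kcal=\C e_0\oplus\Kcal'$, since both $V$ and $\overset{\longrightarrow}{W}$ are naturally written in $2\times2$ blocks adapted to this splitting.

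First I would record the block form of $\overset{\longrightarrow}{W}$: by definition $\overset{\longrightarrow}{W}=a_0^0\oplus W$ acts as the scalar $1$ on $\C e_0$ and as $W$ on $\Kcal'$, so that tensoring with $I_\Hcal$ produces the block-diagonal operators
\[
I_\Hcal\otimes\overset{\longrightarrow}{W}=\begin{pmatrix} I_\Hcal & 0 \\ 0 & I_\Hcal\otimes W \end{pmatrix},\qquad I_\Hcal\otimes\overset{\longrightarrow}{W}^*=\begin{pmatrix} I_\Hcal & 0 \\ 0 & I_\Hcal\otimes W^* \end{pmatrix},
\]
where the upper-left block lives on $\Hcal\otimes\C e_0$ and the lower-right one on $\Hcal\otimes\Kcal'$. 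Writing $V=\begin{pmatrix} V_0^0 & L_0 \\ L^0 & \mathbb{L} \end{pmatrix}$ in the same blocks and multiplying the three matrices, the two diagonal entries come out at once as $V_0^0$ and $(I_\Hcal\otimes W^*)\,\mathbb{L}\,(I_\Hcal\otimes W)$, matching the statement, while the off-diagonal entries are $L_0\,(I_\Hcal\otimes W)$ and $(I_\Hcal\otimes W^*)\,L^0$. Thus the whole content of the lemma is the identification of these two operators with $L_0\star W$ and $W^*\star L^0$.

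Next I would check these two identifications by acting on simple tensors and unwinding the definition of $\star$. Writing $W=(w_{i,j})$, one has $(I_\Hcal\otimes W)(h\otimes e_k)=\sum_j w_{j,k}\,h\otimes e_j$, hence $L_0\,(I_\Hcal\otimes W)(h\otimes e_k)=\sum_j w_{j,k}\,V_0^j h$, so the $k$-th entry of the row $L_0\,(I_\Hcal\otimes W)$ is $\sum_j V_0^j\,w_{j,k}$, which is exactly $(L_0\star W)_k$ since the scalars $w_{j,k}$ commute with the operators $V_0^j$. Symmetrically, $(I_\Hcal\otimes W^*)\,L^0(h\otimes e_0)=\sum_i\big(\sum_j (W^*)_{i,j}\,V_j^0\big)h\otimes e_i$, whose $i$-th entry is $\sum_j (W^*)_{i,j}\,V_j^0=(W^*\star L^0)_i$. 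This matches the claimed off-diagonal blocks and closes the argument.

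The only point requiring any care --- and the reason the statement calls it ``a simple computation'' rather than something wholly trivial --- is the index bookkeeping in this last step: one has to keep track of which index of $W$ is contracted and to use that its entries are scalars, so that they commute past the operator-valued components of $L_0$ and $L^0$. Beyond this indexing there is no structural obstacle, since once the block decomposition is fixed the identity is nothing more than left and right multiplication of $V$ by a block-diagonal matrix.
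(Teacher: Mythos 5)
Your proof is correct and is precisely the ``simple computation on block-matrices'' that the paper asserts without writing out: the block multiplication adapted to $\Kcal=\C e_0\oplus\Kcal'$ together with the index check identifying $L_0\,(I_\Hcal\otimes W)$ with $L_0\star W$ and $(I_\Hcal\otimes W^*)\,L^0$ with $W^*\star L^0$ under the paper's definition of $\star$. Nothing is missing.
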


We focus on the typical dipole Hamiltonian usually considered in the weak coupling limit or van Hove limit \cite{Dav1}:

\begin{equation*}
H=H_\Scal\otimes I_\Kcal + I_\Hcal\otimes H_\Ecal + \sum_{i=1}^{d}[V_i\otimes a_i^0 + V_i^*\otimes a_0^i],
\end{equation*}

\noindent where $H_\Scal,V_1,...,V_d\in\Bcal(\Hcal)$, $H_\Ecal\in\Bcal(\Kcal)$ and $H_\Scal,H_\Ecal$ are selfadjoint operators. To simplify the notations, we write $V=\begin{pmatrix} V_1 & \cdots & V_d\end{pmatrix}^T$, so that we have in the orthonormal basis $(e_i)_{0\leq i\leq d}$:

\begin{equation}
H=H_\Scal\otimes I_\Kcal + I\otimes H_\Ecal +\begin{pmatrix}
0 & V^* \\
V & 0_{\Kcal'}
\end{pmatrix}.
\label{eqdipolham}
\end{equation}

\noindent First remark that $\Acal(H)'$ is the commutant of the set

\begin{equation*}
\left\{H_\Ecal,\ \sum_{i=1}^{d}[\sca{f}{V_ig}\otimes a_i^0 + \sca{f}{V_i^*g}\otimes a_0^i],\ f,g\in\Hcal\right\}.
\end{equation*}

\noindent To make it simpler, we will suppose that $H_\Ecal=0$ (that is, we switch to the interaction picture of time evolution), so that

\begin{equation}
\Acal(H)'=\left\{Y\in\Bcal(\Kcal),\ \left[I_\Hcal\otimes Y,\begin{pmatrix}
0 & V^* \\
V & 0_{\Kcal'}
\end{pmatrix}\right]=0\right\}'.
\label{eqcomham}
\end{equation}

\noindent Here is our result.

\begin{theo}
Suppose $\Kcal=\C^{d+1}$, with $d\geq1$, and
\begin{equation*}
H=H_\Scal\otimes I_\Kcal+\begin{pmatrix}
0 & V^* \\
V & 0_{\Kcal'}
\end{pmatrix},
\end{equation*}
for some $V=\begin{pmatrix}V_1 & \cdots & V_d\end{pmatrix}^T\in\Bcal(\Hcal)^{\oplus d}$. Let $m$ be the dimension of the subspace of $\Bcal(\Hcal)$ generated by the $V_i$'s. Then we are in one of the following situations:
\begin{itemize}
\item either there exist $\theta\in\R$, $a_1,...,a_d\in\C$ possibly null, such that
\begin{equation}
H=H_\Scal\otimes I_\Kcal+e^{-i\theta/2}V_1\otimes\begin{pmatrix}
0 & e^{-i\theta/2}\overline{a_1} & \cdots & e^{-i\theta/2}\overline{a_d}\\
e^{i\theta/2}a_1 & 0 & \cdots  & 0 \\
\vdots & \vdots &  & \vdots \\
e^{i\theta/2}a_d & 0 & \cdots  & 0\\
\end{pmatrix}.
\label{eqtheodipolham}
\end{equation}
In this case
\begin{equation*}
\Acal(H)=\text{alg}\left\{\begin{pmatrix}
0 & e^{-i\theta/2}\overline{a_1} & \cdots & e^{-i\theta/2}\overline{a_d}\\
e^{i\theta/2}a_1 & 0 & \cdots  & 0 \\
\vdots & \vdots &  & \vdots \\
e^{i\theta/2}a_d & 0 & \cdots  & 0\\
\end{pmatrix}\right\},
\end{equation*}
and consequently it is commutative.
\item either $\Kcal$ is the orthogonal sum of two subspaces $\Kcal_1\cong\C^{m+1}$ and $\Kcal_2\cong\C^{d-m}$, such that $\Hcal\otimes\Kcal_1$ and $\Hcal\otimes\Kcal_2$ are stable under $H$ and $\Acal(H)$ can be decomposed as
\begin{equation}
\Acal(H)=\Bcal(\Kcal_1)\oplus\C I_{\Kcal_2}.
\label{eqtheoalgdipolham}
\end{equation}
\end{itemize}
\label{theoalgdipolham}
\end{theo}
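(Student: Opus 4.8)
The plan is to compute the commutant $\Acal(H)'$ explicitly from the characterization (\ref{eqcomham}) and then take its bicommutant. Write $T=\begin{pmatrix}0 & V^*\\ V & 0_{\Kcal'}\end{pmatrix}$, so that $\Acal(H)'=\{Y:[I_\Hcal\otimes Y,T]=0\}$. First I would normalize the coupling: the assignment $e_i\mapsto V_i$ is a linear map $\C^d\to\Bcal(\Hcal)$ of rank $m$, so I can choose a unitary $W$ on $\Kcal'$ for which the rotated coupling $W^*\star V$ has its last $d-m$ components equal to $0$ and its first $m$ components linearly independent. By Lemma \ref{lemunittransfom}, conjugating $H$ by $I_\Hcal\otimes\overset{\longrightarrow}{W}$ replaces $V$ by $W^*\star V$; since this conjugation turns each picture $H(f,g)$ into $\overset{\longrightarrow}{W}^*H(f,g)\overset{\longrightarrow}{W}$ (the unitary acts only on $\Kcal$ and so pulls out of the partial trace), it merely conjugates $\Acal(H)$. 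Hence I may assume $V_{m+1}=\dots=V_d=0$ and $V_1,\dots,V_m$ linearly independent. Setting $\Kcal_1=\mathrm{span}(e_0,\dots,e_m)$ and $\Kcal_2=\mathrm{span}(e_{m+1},\dots,e_d)$, the operator $T$ then preserves $\Hcal\otimes\Kcal_1$ and vanishes on $\Hcal\otimes\Kcal_2$, which already produces the $H$-stable splitting and forces the $\Kcal_2$-block of $\Acal(H)$ to reduce to $\C I_{\Kcal_2}$.

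Next I would parametrize an element $Y\in\Acal(H)'\cap\Bcal(\Kcal_1)$ as a block matrix $\begin{pmatrix}y & \beta\\ \gamma & Z\end{pmatrix}$, with $y\in\C$, $\beta$ a row and $\gamma$ a column in $\C^m$, and $Z\in\Bcal(\Kcal_1')\cong M_m(\C)$, and expand $[I_\Hcal\otimes Y,T]=0$ block by block. This yields four families of operator identities. The ones coming from the $(0,k)$ and $(i,0)$ blocks read $\sum_j Z_{jk}V_j^*=y\,V_k^*$ and $\sum_j Z_{ij}V_j=y\,V_i$; since $V_1,\dots,V_m$ are linearly independent (hence so are $V_1^*,\dots,V_m^*$), these force the coefficients to match, giving $Z=y\,I_m$. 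The $(i,k)$ block gives the crucial relation $\gamma_i V_k^*=\beta_k V_i$ for all $i,k$, and the $(0,0)$ block gives $\sum_j(\beta_j V_j-\gamma_j V_j^*)=0$.

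The dichotomy then falls out of the relation $\gamma_i V_k^*=\beta_k V_i$. When $m\ge2$, or when $m=1$ with $V_1^*\notin\C V_1$, a nonzero $\gamma_i$ would force every $V_k^*$ to be proportional to the single operator $V_i$, contradicting the linear independence of $V_1^*,\dots,V_m^*$; hence $\gamma=0$, and then $\beta_k V_i=0$ forces $\beta=0$. Thus $Y=y\,I_{\Kcal_1}$, i.e.\ $\Acal(H)'\cap\Bcal(\Kcal_1)=\C I_{\Kcal_1}$, so that $\Acal(H_1)=\Bcal(\Kcal_1)$ where $H_1$ denotes the restriction of $H$ to $\Hcal\otimes\Kcal_1$. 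Reassembling over $\Kcal=\Kcal_1\oplus\Kcal_2$: all generators $H(f,g)$ are supported in $\Bcal(\Kcal_1)$, the $*$-algebra they generate there has trivial commutant and hence (by simplicity of $M_{m+1}(\C)$) equals $\Bcal(\Kcal_1)$ and contains $I_{\Kcal_1}$, while $I_\Kcal$ supplies $I_{\Kcal_2}$. This gives exactly $\Acal(H)=\Bcal(\Kcal_1)\oplus\C I_{\Kcal_2}$, the second alternative (\ref{eqtheoalgdipolham}).

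The remaining possibility is $m=1$ with $V_1^*\in\C V_1$, say $V_1=e^{i\theta}V_1^*$. Here every picture $H(f,g)$ is a scalar multiple of one Hermitian matrix: writing $V_i=a_iV_1$ in the original coordinates and factoring out $e^{-i\theta/2}V_1$ recovers precisely the explicit form (\ref{eqtheodipolham}), and $\Acal(H)$ is then the algebra generated by that single self-adjoint matrix, hence commutative, which is the first alternative. I expect the main obstacle to be the step in the third paragraph: recognizing that the only obstruction to $\gamma=\beta=0$ in the relation $\gamma_iV_k^*=\beta_kV_i$ is the collapse $V_k^*\in\C V_i$, which can survive only when $m=1$ and $V_1$ is self-adjoint up to a phase. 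The block-commutator bookkeeping of the second paragraph is routine but must be carried out carefully so that the four families of identities are extracted cleanly.
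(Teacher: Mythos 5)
Your proof is correct and follows essentially the same route as the paper: reduce by a unitary rotation of $\Kcal'$ (the paper's Lemma \ref{lemdipolham}) to linearly independent $V_1,\dots,V_m$ with the remaining components zero, split off $\Kcal_2$, and solve $[I_\Hcal\otimes Y,H]=0$ block by block, with the dichotomy hinging on the same relation $y_{i0}V_k^*=y_{0k}V_i$ (the paper's condition (\ref{eqcond2})). The only cosmetic differences are that you record all four families of block identities where the paper singles out two, and you handle the commutative alternative by inspecting the generators directly rather than through the commutant.
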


\noindent Before giving the proof of Theorem \ref{theoalgdipolham}, we introduce the following lemma which allows to reduce the problem to the minimal number of non-zero $V_k$ and to assume their freeness.

\begin{lem}
Suppose $\Kcal=\C^{d+1}$, with $d\geq1$, and
\begin{equation}
H=H_\Scal\otimes I_\Kcal+\begin{pmatrix}
0 & V^* \\
V & 0_{\Kcal'}
\end{pmatrix},
\label{eqlemdipolham1}
\end{equation}
for some $V=\begin{pmatrix}V_1 & \cdots & V_d\end{pmatrix}^T\in\Bcal(\Hcal)^{\oplus d}$. Let $m$ be the dimension of the subspace of $\Bcal(\Hcal)$ generated by the $V_i$'s.
\\Then there exists an orthonormal basis $(e'_i)_{1\leq i \leq d}$ of $\Kcal'$ such that in the new basis $(e_0,e'_1,...,e'_d)$, $H$ has the form:
\begin{equation}
H=H_\Scal\otimes I_\Kcal+\begin{pmatrix}
0 & (V')^* \\
V' & 0_{\Kcal'}
\end{pmatrix},
\label{eqlemdipolham2}
\end{equation}
\noindent where $V'=\begin{pmatrix} V'_1 & \cdots V'_m & 0 & \cdots & 0 \end{pmatrix}^T\in\Bcal(\Hcal)^{\oplus d}$.
\\Furthermore, the $V'_i$'s are linearly independent, and $d-m$ is the maximal number of components that can be canceled.
\label{lemdipolham}
\end{lem}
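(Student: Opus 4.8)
The plan is to use Lemma \ref{lemunittransfom} to turn the statement into a purely linear-algebraic fact about the family $(V_1,\dots,V_d)$. First I would read off the block data of the Hamiltonian (\ref{eqlemdipolham1}): in the notation of Lemma \ref{lemunittransfom} one has $V_0^0=H_\Scal$, $L_0=V^*$, $L^0=V$ and $\mathbb{L}=H_\Scal\otimes I_{\Kcal'}$. Conjugating by $I_\Hcal\otimes\overset{\longrightarrow}{W}$ for a unitary $W$ on $\Kcal'$ therefore fixes the top-left block $H_\Scal$, sends the bottom-right block to $(I_\Hcal\otimes W^*)(H_\Scal\otimes I_{\Kcal'})(I_\Hcal\otimes W)=H_\Scal\otimes I_{\Kcal'}$ because $W$ is unitary, and replaces the interaction column $V$ by $V'=W^*\star V$. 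Since moreover $V^*\star W=(W^*\star V)^*$, the conjugated operator is again of the form (\ref{eqlemdipolham2}) with interaction vector $V'$, and this conjugation is exactly the passage to the new orthonormal basis $e'_i=We_i$ on $\Kcal'$. Thus it suffices to produce a unitary $W$ for which $V'=W^*\star V$ has its last $d-m$ components zero and its first $m$ components linearly independent.

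Second, I would encode the remaining task through the linear map $T\colon\C^d\to\Bcal(\Hcal)$, $T(c_1,\dots,c_d)=\sum_{j=1}^d c_j V_j$, whose range is the span of the $V_i$'s and so has dimension $m$, giving $\dim\ker T=d-m$. Writing $(W^*\star V)_i=\sum_j\overline{W_{ji}}\,V_j=T(u_i)$, where $u_i\in\C^d$ is the conjugate of the $i$-th column of $W$, I observe that the $i$-th transformed component vanishes precisely when $u_i\in\ker T$. The construction is then immediate: choose an orthonormal basis $u_{m+1},\dots,u_d$ of $\ker T$, complete it to an orthonormal basis $u_1,\dots,u_d$ of $\C^d$, and let $W$ be the unitary whose $i$-th column is $\overline{u_i}$ (unitary since the $u_i$ are orthonormal). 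Then $V'_i=T(u_i)=0$ for $i>m$, while $u_1,\dots,u_m$ form an orthonormal basis of $(\ker T)^\perp$, on which $T$ is injective, so $V'_1=T(u_1),\dots,V'_m=T(u_m)$ are linearly independent, as required.

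Finally, for the maximality claim I would note that $(V'_1,\dots,V'_d)=(V_1,\dots,V_d)\,\overline{W}$ with $\overline{W}$ invertible, so the span of the $V'_i$'s again has dimension $m$ for \emph{any} choice of $W$; hence at most $d-m$ of the components can be made to vanish, and the construction above attains this bound. The argument is essentially routine, and the only point needing genuine care is the consistent bookkeeping of the complex conjugates in the $\star$-action: one must track them precisely to be sure that the surviving components $V'_1,\dots,V'_m$ really emerge linearly independent and that the anti-diagonal self-adjoint shape $\begin{pmatrix}0&(V')^*\\V'&0\end{pmatrix}$ is preserved, both of which reduce to the identity $V^*\star W=(W^*\star V)^*$ already used above.
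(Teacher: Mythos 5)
Your proof is correct and follows essentially the same route as the paper: a unitary change of orthonormal basis of $\Kcal'$ acts on the coupling tuple $(V_1,\dots,V_d)$ by an invertible linear transformation of the index space $\C^d$, so the claim reduces to a rank count, with the maximal number of cancellable components equal to $\dim\ker T = d-m$. Your bookkeeping via the map $T$ and rank--nullity is a cleaner repackaging of the paper's argument with the coordinate vectors $\vec{v}_{i,j}$ and the matrix $A$ (whose rank computation is exactly the identity $\dim\ker T = d-m$), and you are somewhat more explicit than the paper about constructing $W$ and checking that the self-adjoint anti-diagonal block form is preserved.
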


\begin{proof}[Proof of Lemma \ref{lemdipolham}]
If $m=0$, then $V_1=\cdots=V_d=0$ so that the result is trivial. Suppose that $m\geq1$.
\\We identify $V_k$ with the matrix $(v^{i,j}_k)_{1\leq i,j\leq N}$ in some orthonormal basis of $\Hcal$. Our goal is to find a unitary operator $W$ on $\Kcal'$ such that only the first $m$ components of $W^*\star V$ are non-zero.
\\Introduce the vector $\vec{v}_{i,j}=(v^{i,j}_1,\cdots,v^{i,j}_d)$ of $\C^d$. We will use the two following spaces:
\begin{align*}
& \Vcal_1=\text{span}\{V_1,...,V_d\}\subset\Bcal(\Hcal),\\
& \Vcal_2=\text{span}\{\vec{v}_{i,j},\ i,j=1,...,N\}\subset\C^d.
\end{align*}
If $W$ is a unitary operator on $\Kcal'$, we write similarly
\begin{equation*}
V'=\begin{pmatrix}
V'_1\\ \vdots\\V'_d
\end{pmatrix}=W^*\star V,
\end{equation*}
where $V'_k=((v')^{i,j}_k)_{1\leq i,j\leq N}$. If we write $\vec{v'}_{i,j}=((v')^{i,j}_1,\cdots,(v')^{i,j}_d)$, then for all $i,j=1,...,N$
\begin{equation*}
\vec{v'}_{i,j}=W^*\vec{v}_{i,j}.
\end{equation*}
Consequently, canceling the last $d-k$ components of $V$, for $k=1,...,d$, is equivalent to finding $W$ such that for all $i,j=1,...,N$:
\begin{align*}
&(v')^{i,j}_l\ne 0\text{ for all }l=1,...,k,\\
&(v')^{i,j}_l= 0\text{ for all }l=k+1,...,d.
\end{align*}
The maximal number of components that we can cancel is thus $d-\dim \Vcal_2$. Thus we have to prove that $\dim \Vcal_2=\dim\Vcal_1=m$ to finish the proof. Let $A$ be the $N^2\times d$-matrix whose rows are the $\vec{v}_{i,j}$. For $\lambda_1,...,\lambda_d\in\C$, we have
\begin{equation*}
\begin{pmatrix}
\lambda_1 I_\Hcal & \cdots & \lambda_d I_\Hcal
\end{pmatrix} V=0\quad \text{iff}\quad A\begin{pmatrix}
\lambda_1 \\ \vdots \\ \lambda_d
\end{pmatrix}=0.
\end{equation*}
Consequently, the rank of $A$ is equal to $\dim \Vcal_1=m$. However by definition it is equal to the dimension of the subspace spanned by its rows, that is $\dim \Vcal_2$. Hence the result.
\end{proof}

\noindent Now we can prove Theorem \ref{theoalgdipolham}.

\begin{proof}[Proof of Theorem \ref{theoalgdipolham}]
If $m=0$, the result is trivial. Suppose that $m\geq1$.
\\Using Lemma \ref{lemdipolham} we assume that only the $m$ first $V_k$'s are non-zero, and that they are independent. Our method in order to study $\Acal(H)$ is the following: take $Y\in\Acal(H)'$ and solve in $Y$ the equation $[I_\Hcal\otimes Y,H]=0$. With our assumptions on the $V_k$'s, it is clear that $\Kcal'$ is the orthogonal direct sum of two subspaces $\Kcal_1$ and $\Kcal_2$, of dimension $m$ and $d-m$ respectively, such that
\begin{align*}
& H=H_1\oplus 0_{\Hcal\otimes\Kcal_2}; \\
& \Acal(H)= \Acal(H_1)\oplus \C I_{\Kcal_2};
\end{align*}
where $H_1$ is the selfadjoint operator induced by $H$ on $\Hcal\otimes\left(\C e_0\oplus\Kcal_1\right)$. Thus we can restrict the study to $H_1$ only.
\\Take $Y\in\Acal(H_1)'$. As $\Acal(H_1)'$ is a $*$-algebra, we can assume without loss of generality that $Y$ is selfadjoint. We identify $Y$ with the matrix $(y_{ij})_{0\leq i,j\leq m}$ using the orthonormal basis $(e_i)_{0\leq i\leq m}$ of $\Kcal_1$. Now as $y_{00}I_{\Kcal}\in\Acal(H_1)'$, taking $Y-y_{00}I_{\Kcal}$ we can furthermore assume that $y_{00}=0$. Then, solving $[I_\Hcal\otimes Y,H_1]=0$, we obtain in particular the following conditions:
\begin{align}
& \forall i=1,...,m,\quad \sum_{j=1}^m{y_{ij}V_j}=0; \label{eqcond1}\\
& \forall i,j=1,...,m,\quad y_{i0}V_j^*=y_{0j}V_i. \label{eqcond2}
\end{align}
As the $V_i's$ are non-zero, condition (\ref{eqcond2}) implies that the complex $y_{i0}$ are all zero or all non-zero for $i=1,...,m$.
\begin{itemize}
\item If they are non-zero, the same condition implies that the dimension of the space $\text{span}\left\lbrace V_1,...,V_m\right\rbrace$ is one, so that $m=1$. In this case, writing $\frac{y_{01}}{y_{10}}=a\in\C$, condition (\ref{eqcond2}) is just $V_1^*=aV_1$ so that $V_1=|a|^2V_1$ and consequently $|a|=1$. We write $a=e^{i\theta}$ with $\theta\in\R$ and we obtain (recall that we are in the case $m=1$):
\begin{equation*}
H_1=H_\Scal\otimes I_\Kcal+e^{-i\theta/2}V_1\otimes\begin{pmatrix}
0 & e^{-i\theta/2} \\
e^{i\theta/2} & 0
\end{pmatrix}.
\end{equation*}
Furthermore it is straightforward that for any unitary operator $W$ on $\Kcal'$,
\begin{equation*}
W^*\star\begin{pmatrix}
V_1 \\ 0 \\ \vdots \\ 0
\end{pmatrix}=\begin{pmatrix}
\sca{e_1}{W^*e_1} V_1 \\ \vdots \\ \sca{e_d}{W^*e_1}V_1
\end{pmatrix}.
\end{equation*}
Writing $a_i=\sca{e_i}{W^*e_1}$ for all $i=1,...,d$ by Lemma \ref{lemdipolham} we obtain that $H$ is of the form of Equation (\ref{eqtheodipolham}).
\item Suppose that the $y_{0i}$'s are null. Then condition (\ref{eqcond1}) and the fact that the $V_k$'s are linearly independent imply that the matrix $(y_{ij})_{1\leq i,j\leq m}$ is null, and consequently that $Y=0$ (we have assume $y_{00}=0$). It shows that in this case $\Acal(H_1)=\Bcal(\Kcal_1)$, which conclude the proof.
\end{itemize}
\end{proof}

\label{sect13}
\section{The Environment Right-Action Algebra}

If one is only interested in the evolution of the observables of the system, then the algebra $\Acal(U)$ may not be the most relevant. For instance, consider the unitary operators $U$ and $V$ on $\Hcal\otimes\C^2$, written as block-matrices in the canonical basis of $\C^2$:

\begin{equation}
U=\begin{pmatrix}
0 & U_2 \\ U_1 & 0
\end{pmatrix},\quad V=\begin{pmatrix}
U_1 & 0 \\ 0 & U_2\end{pmatrix},
\label{eqexunit1}
\end{equation}

\noindent where $U_1$ and $U_2$ are two unitary operators on $\Hcal$. If $U_1$ and $U_2$ are not collinear, then it is not difficult to compute that $\Acal(U)=\Bcal(\Kcal)$ and $\Acal(V)=\left\lbrace\begin{pmatrix}a & 0 \\ 0 & b\end{pmatrix},\quad a,b\in\C\right\rbrace$ which is a commutative algebra. However, computing the corresponding evolution of an observable of the system $X\in\Bcal(\Hcal)$ in the Heisenberg picture leads to:

\begin{equation*}
U^* \left(X\otimes I_\Kcal\right) U= V^* \left(X\otimes I_\Kcal\right) V.
\end{equation*}

\noindent Thus, from the point of view of the system, $U$  and $V$ have the same action. Consequently $U$ leads to a classical action of the environment in the sense of \cite{ADP2}, a fact which was not captured by the algebra $\Acal(U)$. Remark however that $U$ and $V$ are linked by the relation

\begin{equation}
U=\left[I_\Hcal\otimes\begin{pmatrix}
0 & 1 \\ 1 & 0
\end{pmatrix}\right]V.
\label{eqexunit2}
\end{equation}

We will come back on this remark and generalize it in Subsection \ref{sect22}.

In Subsection \ref{sect21} we introduce the \emph{Environment Right-Action Algebra} and give a first characterization of it. In Subsection \ref{sect23}, we emphasize a link between this algebra and minimal Stinespring representations of CP maps. Finally in Subsection \ref{sect24} we give a characterization of the Environment Algebra and the Environment Right-Action Algebra in term of the spectrum of a CP map, which provide an operational way to compute those two algebras.

\label{sect2}
\subsection{Definition and first examples}

To capture the idea of the action of the environment on the system we introduce the following algebra.

\begin{de}
Let $U$ be a unitary operator on $\Hcal\otimes\Kcal$. Then we call the \emph{Environment Right-Action Algebra} the von Neumann algebra $\Acal_r(U)$ defined by
\begin{equation}
\Acal_r(U)=\left\lbrace U^*(f_1,g_1)U(f_2,g_2);\quad f_1,f_2,g_1,g_2\in\Hcal\right\rbrace''.
\label{eqdeactenvironmentalg}
\end{equation}
\label{deactionenvironment}
\end{de}

\noindent In the same way we could have introduced the \emph{Environment Left-Action Algebra}, in order to describe the relevant part of the environment for the evolution $UX\otimes I_\Kcal U^*$. Most of the results that follow have their counterpart for this algebra, yet we prefer to focus on the Environment Right-Action Algebra as it corresponds to the physical situation of the Heisenberg picture of time evolution.

As for the Environment Algebra, the Environment Right-Action Algebra is easily characterized.

\begin{prop}
Let $U$ be a unitary operator on $\Hcal\otimes\Kcal$. Then $\Acal_r(U)$ is the smallest von Neumann subalgebra of $\Bcal(\Kcal)$ such that
\begin{equation}
U^* \left(X\otimes I_\Kcal\right) U\in\Bcal(\Hcal)\otimes\Acal_r(U)\text{ for all }X\in\Bcal(\Hcal),
\label{eqpropenvironmentact1}
\end{equation}
i.e. if $\Acal$ is an other von Neumann subalgebra of $\Bcal(\Kcal)$ such that Equation (\ref{eqpropenvironmentact1}) holds, then $\Acal_r(U)\subset\Acal$. Furthermore its commutant is given by
\begin{equation}
\Acal_r(U)'=\left\lbrace Y\in\Bcal(\Kcal),\ [I\otimes Y,U^*\left(X\otimes I_\Kcal\right) U]=0\text{ for all }X\in\Bcal(\Hcal)\right\rbrace.
\label{eqpropenvironmentact2}
\end{equation}
\label{propenvironmentact}
\end{prop}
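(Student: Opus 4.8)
The plan is to mirror the proof of Proposition \ref{propenvironment}: first exhibit a matrix decomposition of $U^*(X\otimes I_\Kcal)U$ showing that $\Acal_r(U)$ satisfies (\ref{eqpropenvironmentact1}), then establish the commutant formula (\ref{eqpropenvironmentact2}), and finally deduce minimality from the Bicommutant Theorem exactly as there. As a preliminary I would record two identities obtained directly from $\sca{\varphi}{U(f,g)\psi}=\sca{f\otimes\varphi}{U(g\otimes\psi)}$, namely $U(f,g)^*=U^*(g,f)$ and $U^*(f,g)^*=U(g,f)$. These show that the generating family $\{U^*(f_1,g_1)U(f_2,g_2)\}$ is stable under taking adjoints, so that $\{\cdots\}''$ is genuinely a von Neumann algebra and $\Acal_r(U)'=\{U^*(f_1,g_1)U(f_2,g_2)\}'$.

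Next, fixing an orthonormal basis $(e_i)$ of $\Hcal$ and using the strongly convergent decomposition $U=\sum_{i,j}\proj{e_i}{e_j}\otimes U(e_i,e_j)$ recalled in the proof of Proposition \ref{propenvironment}, a direct computation of the matrix coefficient $\Tr_{\proj{g}{f}}[U^*(X\otimes I_\Kcal)U]$ — inserting the decompositions of $U$ and $U^*$ and using $U(e_k,e_l)^*=U^*(e_l,e_k)$ — yields
\[
\Tr_{\proj{g}{f}}\!\big[U^*(X\otimes I_\Kcal)U\big]=\sum_{i,j,k,l}\sca{f}{e_l}\,\sca{e_j}{g}\,\sca{e_k}{Xe_i}\;U^*(e_l,e_k)U(e_i,e_j).
\]
Each summand lies in $\Acal_r(U)$ and the sum converges weakly to an element of $\Acal_r(U)$; feeding these coefficients back into the decomposition $U^*(X\otimes I_\Kcal)U=\sum_{i,j}\proj{e_i}{e_j}\otimes\Tr_{\proj{e_i}{e_j}}[U^*(X\otimes I_\Kcal)U]$ then gives $U^*(X\otimes I_\Kcal)U\in\Bcal(\Hcal)\otimes\Acal_r(U)$, which is (\ref{eqpropenvironmentact1}). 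Conversely, specializing $X=\proj{e_k}{e_i}$, $f=e_l$, $g=e_j$ collapses the sum to the single generator $U^*(e_l,e_k)U(e_i,e_j)$, so that every generator of $\Acal_r(U)$ arises as such a partial trace. This double role is exactly what forces $\Acal_r(U)$ to be the smallest admissible algebra, and it is the one genuinely computational step of the argument; I expect the only real care to be in the sesquilinear dependence of $U(f,g)$ on its arguments and in the weak convergence of the sums when $\Hcal$ is infinite dimensional, both handled as in Proposition \ref{propenvironment}.

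For the commutant formula I would argue again as in Proposition \ref{propenvironment}. If $[I_\Hcal\otimes Y,U^*(X\otimes I_\Kcal)U]=0$ for all $X$, then applying $\Tr_{\proj{g}{f}}$ and using $\Tr_{\proj{g}{f}}[(I_\Hcal\otimes Y)W]=Y\,\Tr_{\proj{g}{f}}[W]$ gives $[Y,\Tr_{\proj{g}{f}}[U^*(X\otimes I_\Kcal)U]]=0$; by the specialization above, $Y$ then commutes with every generator, i.e. $Y\in\Acal_r(U)'$. Conversely, if $Y\in\Acal_r(U)'$ then $Y$ commutes with each coefficient $U^*(e_l,e_k)U(e_i,e_j)$, so $I_\Hcal\otimes Y$ commutes termwise with the decomposition of $U^*(X\otimes I_\Kcal)U$ and hence with the whole operator. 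This proves (\ref{eqpropenvironmentact2}).

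Finally, minimality follows formally. If $\Acal$ is a von Neumann subalgebra of $\Bcal(\Kcal)$ with $U^*(X\otimes I_\Kcal)U\in\Bcal(\Hcal)\otimes\Acal$ for all $X$, then for any $Y\in\Acal'$ one has $I_\Hcal\otimes Y\in(\Bcal(\Hcal)\otimes\Acal)'$, whence $[I_\Hcal\otimes Y,U^*(X\otimes I_\Kcal)U]=0$ for all $X$; by (\ref{eqpropenvironmentact2}) this gives $Y\in\Acal_r(U)'$. Thus $\Acal'\subset\Acal_r(U)'$, and the Bicommutant Theorem yields $\Acal_r(U)\subset\Acal$, completing the proof.
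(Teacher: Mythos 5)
Your proposal is correct and follows exactly the route the paper intends: the paper's own proof of Proposition \ref{propenvironmentact} is a two-line remark ("the proof is the same as Proposition \ref{propenvironment}: prove the commutant formula by direct computation, then conclude by the Bicommutant Theorem"), and you have simply carried out that computation explicitly via the block decomposition of $U^*\left(X\otimes I_\Kcal\right)U$ and the specialization $X=\proj{e_k}{e_i}$. The only blemish is a harmless index transposition in writing $\Tr_{\proj{e_i}{e_j}}$ where the paper's convention $U(e_i,e_j)=\Tr_{\proj{e_j}{e_i}}[U]$ would put the indices the other way.
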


\begin{proof}
The proof is the same as Proposition \ref{propenvironment}. First we prove Equality (\ref{eqpropenvironmentact2}) by a direct computation. Then, if $\Acal$ is an other von Neumann algebra such that Equation (\ref{eqpropenvironmentact1}) holds, one verifies immediately that $\Acal'\subset\Acal_r(U)'$, so that the result follows by the Bicommutant Theorem.
\end{proof}

\begin{coro}
Let $U$ be a unitary operator on $\Hcal\otimes\Kcal$. Then $\Acal_r(U)$ is a subalgebra of $\Acal(U)$.
\label{corodefdecompaction}
\end{coro}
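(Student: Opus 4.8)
The plan is to observe that the generators of $\Acal_r(U)$ are themselves built out of the generators of $\Acal(U)$, and then invoke the algebraic closure of a von Neumann algebra. By Definition \ref{deenvironment}, for every $f,g\in\Hcal$ both $U(f,g)$ and $U^*(f,g)$ lie in $\Acal(U)$, since they are precisely the elements whose bicommutant defines $\Acal(U)$. Therefore, for any $f_1,f_2,g_1,g_2\in\Hcal$, the product $U^*(f_1,g_1)\,U(f_2,g_2)$ is a product of two elements of $\Acal(U)$, and as $\Acal(U)$ is an algebra it contains this product. Thus the entire generating family appearing in Definition \ref{deactionenvironment} is contained in $\Acal(U)$.

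Next I would conclude by the universal property of the von Neumann algebra generated by a set. Since $\Acal(U)$ is a von Neumann algebra containing all the generators $U^*(f_1,g_1)U(f_2,g_2)$, and $\Acal_r(U)$ is by definition the smallest such von Neumann algebra (its bicommutant), we obtain $\Acal_r(U)\subset\Acal(U)$. No weak-closure difficulty arises here: $\Acal(U)$ is already weakly closed, so it automatically contains the bicommutant of any subset of its elements.

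As an alternative route that avoids even touching the generators, one can argue directly from the minimality characterizations proved earlier. By Proposition \ref{propenvironment} we have $U,U^*\in\Bcal(\Hcal)\otimes\Acal(U)$; since $\Bcal(\Hcal)\otimes\Acal(U)$ is an algebra and contains every $X\otimes I_\Kcal$ with $X\in\Bcal(\Hcal)$, it follows that
\begin{equation*}
U^*\left(X\otimes I_\Kcal\right)U\in\Bcal(\Hcal)\otimes\Acal(U)\quad\text{for all }X\in\Bcal(\Hcal).
\end{equation*}
Proposition \ref{propenvironmentact} then identifies $\Acal_r(U)$ as the smallest von Neumann subalgebra of $\Bcal(\Kcal)$ with this property, so $\Acal_r(U)\subset\Acal(U)$ immediately.

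There is, in truth, no real obstacle to overcome: the statement is a direct consequence of the definitions together with the fact that $\Acal(U)$ is closed under products and weak limits. The only point deserving a moment's care is the bookkeeping that $\Acal(U)$ is generated by both the $U(f,g)$ \emph{and} the $U^*(f,g)$, which is exactly what makes each factor $U^*(f_1,g_1)$ and $U(f_2,g_2)$ available inside $\Acal(U)$.
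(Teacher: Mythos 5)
Your proposal is correct. The ``alternative route'' you give at the end is exactly the paper's proof: since $U,U^*\in\Bcal(\Hcal)\otimes\Acal(U)$, one gets $U^*\left(X\otimes I_\Kcal\right)U\in\Bcal(\Hcal)\otimes\Acal(U)$ for all $X\in\Bcal(\Hcal)$, and the minimality in Proposition \ref{propenvironmentact} concludes. Your first, generator-level argument is an equally valid (and even more elementary) variant: each generator $U^*(f_1,g_1)U(f_2,g_2)$ of $\Acal_r(U)$ is a product of elements of the $*$-stable generating set of $\Acal(U)$, hence lies in $\Acal(U)$, and taking bicommutants preserves the inclusion.
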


\begin{proof}
As $\Bcal(\Hcal)\otimes I_\Kcal\subset \Bcal(\Hcal)\otimes\Acal(U)$ and $U,U^*\in\Bcal(\Hcal)\otimes\Acal(U)$, we have $U^*\left(\Bcal(\Hcal)\otimes I_\Kcal\right) U\subset \Bcal(\Hcal)\otimes\Acal(U)$. By the characterization of $\Acal_r(U)$, the corollary holds.
\end{proof}

As in the case of the Environment Algebra, commutative Right-Action Algebras and the Spontaneous Emission are the two main examples. The commutative case will be studied as a corollary of Theorem \ref{theoaction}, so that we only treat the case of the Spontaneous Emission here.

\paragraph{Example of the Spontaneous Emission:}
Consider again the unitary operator $U_{\text{se}}$ given by Equation (\ref{eqSpontemiss}), with $\theta\notin 2\pi\Zbb$. We show that $\Acal_r(U)=\Acal(U)=\Bcal(\C^2)$, so that the Environment Right-Action is also purely quantum.
\\To check this, we show that there exists a pure state $\proj{\Omega}{\Omega}$ of $\Kcal$ such that $\Lcal_{\proj{\Omega}{\Omega}}$ is not of the form (\ref{eqexcom2}). If it were the case, $\Lcal_\omega$ would be trace-preserving. Let $(e_0,e_1)$ be the canonical basis of $\C^2$. Then
\begin{align*}
\Lcal_{\proj{e_0}{e_0}}(\proj{e_0}{e_0})
& = \begin{pmatrix}
1 & 0 \\ 0 & \cos{\theta}
\end{pmatrix}\proj{e_0}{e_0}\begin{pmatrix}
1 & 0 \\ 0 & \cos{\theta}
\end{pmatrix} \\
&\quad +\begin{pmatrix}
0 & 0 \\ \sin{\theta} & 0
\end{pmatrix}\proj{e_0}{e_0}\begin{pmatrix}
0 & \sin{\theta} \\ 0 & 0
\end{pmatrix} \\
& = \begin{pmatrix}
1 & 0 \\ 0 & \sin^2{\theta}
\end{pmatrix}.
\end{align*}
It is clear that whenever $\theta\notin 2\pi\Zbb$, $\Tr\left[\Lcal_{\proj{e_0}{e_0}}(\proj{e_0}{e_0})\right]\ne 1$.
\\Consequently, we obtain the announced result:
\begin{equation}
\Acal_r(U_{\text{se}})=\Bcal(\C^2).
\label{eqexspontem2}
\end{equation}
Note that with this method we also directly prove that $\Acal(U_{\text{se}})=\Bcal(\C^2)$, as by corollary \ref{corodefdecompaction} $\Acal_r(U)\subset\Acal(U)$.

\label{sect21}
\subsection{Link between $\Acal(U)$ and $\Acal_r(U)$}

As we remarked before, there is a relation between two unitary operators that have the same action on the system. This point is emphasized in the following lemma (which is originally from \cite{DNP}).

\begin{lem}
Let $U$ and $V$ be two unitary operators on $\Hcal\otimes\Kcal$. Suppose that for all $X\in\Bcal(\Hcal)$,
\begin{equation}
V^* \left(X\otimes I_\Kcal\right) V=U^* \left(X\otimes I_\Kcal\right) U 
\label{eqlemclassaction1}
\end{equation}
Then there exists a unitary operator $W$ on $\Kcal$ such that
\begin{equation}
V=\left(I_\Hcal\otimes W\right) U.
\label{eqlemclassaction2}
\end{equation}
\label{lemclassaction}
\end{lem}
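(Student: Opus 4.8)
The plan is to turn the Heisenberg-picture identity (\ref{eqlemclassaction1}) into a commutation relation and then read off $W$ from the commutant of $\Bcal(\Hcal)\otimes I_\Kcal$. First I would introduce the unitary operator $T:=VU^*$ on $\Hcal\otimes\Kcal$ (a product of two unitaries, hence unitary). Because $U$ is unitary, proving (\ref{eqlemclassaction2}) amounts precisely to showing that $T$ factorizes as $T=I_\Hcal\otimes W$ for some $W\in\Bcal(\Kcal)$: granting this, $V=TU=(I_\Hcal\otimes W)U$, and it only remains to check that $W$ is unitary.

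The central step is to show that $T$ commutes with $X\otimes I_\Kcal$ for every $X\in\Bcal(\Hcal)$. Since $T^*=UV^*$, the hypothesis (\ref{eqlemclassaction1}) gives, for all $X\in\Bcal(\Hcal)$,
\begin{equation*}
T^*\,(X\otimes I_\Kcal)\,T=U\big(V^*(X\otimes I_\Kcal)V\big)U^*=U\big(U^*(X\otimes I_\Kcal)U\big)U^*=X\otimes I_\Kcal.
\end{equation*}
Left-multiplying by $T$ and using $TT^*=I_{\Hcal\otimes\Kcal}$ then yields $T(X\otimes I_\Kcal)=(X\otimes I_\Kcal)T$, that is, $T$ lies in the commutant of $\Bcal(\Hcal)\otimes I_\Kcal$.

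To conclude, I would invoke the standard identification $\left(\Bcal(\Hcal)\otimes I_\Kcal\right)'=I_\Hcal\otimes\Bcal(\Kcal)$. Hence $T=I_\Hcal\otimes W$ for a unique $W\in\Bcal(\Kcal)$, and the relations $T^*T=TT^*=I_{\Hcal\otimes\Kcal}$ force $W^*W=WW^*=I_\Kcal$, so $W$ is unitary; this is exactly (\ref{eqlemclassaction2}). The only substantive ingredient is this commutant identity, the rest being a one-line manipulation. In the generality of the paper, where $\Hcal$ may be infinite-dimensional, it follows from the commutation theorem for von Neumann tensor products, but it can also be obtained by hand: writing $T=\sum_{i,j}\proj{e_i}{e_j}\otimes T_{ij}$ with $T_{ij}\in\Bcal(\Kcal)$ in an orthonormal basis $(e_i)$ of $\Hcal$ and imposing commutation with each $\proj{e_k}{e_l}\otimes I_\Kcal$ forces $T_{ij}=\delta_{ij}W$ for a single $W\in\Bcal(\Kcal)$. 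That verification is where I expect the only real care to be needed.
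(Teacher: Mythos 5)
Your proposal is correct and follows essentially the same route as the paper: both set $T=VU^*$, derive $T^*(X\otimes I_\Kcal)T=X\otimes I_\Kcal$ from the hypothesis, conclude $[T,X\otimes I_\Kcal]=0$, and identify $T=I_\Hcal\otimes W$ via the commutant $\left(\Bcal(\Hcal)\otimes I_\Kcal\right)'=I_\Hcal\otimes\Bcal(\Kcal)$. You merely spell out the commutant identification and the unitarity of $W$ in more detail than the paper does.
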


\begin{proof}
From Equation (\ref{eqlemclassaction1}) we obtain that for all $X\in\Bcal(\Hcal)$, $X\otimes I_\Kcal=(VU^*)^* \left(X\otimes I_\Kcal\right) (VU^*)$ and then $[VU^*,X\otimes I_\Kcal]=0$. This in turn implies the existence of a unitary operator $W$ on $\Kcal$ such that $VU^*=I_\Hcal\otimes W$. Clearly $W$ is unitary.
\end{proof}

\noindent We call $\Rcal_r(U)$ the class of $U$ for this relation, that is, $V\in\Rcal_r(U)$ if and only if Equality (\ref{eqlemclassaction1}) holds. It is straightforward that it is an equivalence relation and, because of Proposition \ref{propenvironment}, that every element in this class share the same Environment Right-Action Algebra. That is, for all $V\in\Rcal_r(U)$, we have $\Acal_r(V)=\Acal_r(U)$. Moreover, because of Corollary \ref{corodefdecompaction} for all $V\in\Rcal_r(U)$ one has $\Acal_r(U)\subset\Acal(V)$. Consequently,

\begin{equation*}
\Acal_r(U)\subset \bigcap_{V\in\Rcal_r(U)}\Acal(V). 
\end{equation*}

\noindent A natural question now is whether the equality holds. To answer this question we show under some hypotheses the following: there exists an element in the class $\Rcal_r(U)$ whose Environment Algebra and Environment Right-Action Algebra coincide. This result also implies that we can reduce the study of $\Acal_r(U)$ to the one of $\Acal(V)$ for a specific $V\in\Rcal_r(U)$.

\begin{theo}
Suppose that $\Hcal$ is finite dimensional and that $\Acal_r(U)$ is a type I von Neumann algebra. Then:
\begin{enumerate}
\item there exists a unitary operator $V\in\Rcal_r(U)$ such that:
\begin{equation}
V\in\Bcal(\Hcal)\otimes\Acal_r(U)
\label{eqtheoaction1}
\end{equation}
and consequently
\begin{equation}
\Acal(V)=\Acal_r(V)=\Acal_r(U).
\label{eqtheoaction2}
\end{equation}
\item If $V_1,V_2\in\Rcal_r(U)$ both satisfy Equation (\ref{eqtheoaction2}), then $V_1V_2^*\in I_\Hcal\otimes\Acal_r(U)$.
\end{enumerate}
\label{theoaction}
\end{theo}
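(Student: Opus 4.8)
The plan is to exploit the structural hypothesis that $\Acal_r(U)$ is a type I von Neumann algebra on $\Kcal$, which means it admits a decomposition of the form $\Acal_r(U)\cong\bigoplus_k \Bcal(\Hcal_k)\otimes I_{\Kcal_k}$ relative to a direct sum decomposition $\Kcal\cong\bigoplus_k \Hcal_k\otimes\Kcal_k$. The key object is the commutant $\Acal_r(U)'$: by Equation (\ref{eqpropenvironmentact2}), every $Y\in\Acal_r(U)'$ satisfies $[I_\Hcal\otimes Y,U^*(X\otimes I_\Kcal)U]=0$ for all $X\in\Bcal(\Hcal)$. My strategy for part (1) is to build the desired $V$ by correcting $U$ with an element of $I_\Hcal\otimes\Bcal(\Kcal)$, exactly as in the motivating example of Equation (\ref{eqexunit2}): I seek a unitary $W$ on $\Kcal$ such that $V=(I_\Hcal\otimes W)U$ lies in $\Bcal(\Hcal)\otimes\Acal_r(U)$. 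By Lemma \ref{lemclassaction}, any such $V$ automatically lies in $\Rcal_r(U)$, so the content is entirely in arranging membership in $\Bcal(\Hcal)\otimes\Acal_r(U)$.

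First I would reformulate the membership condition. By the Bicommutant Theorem, $V\in\Bcal(\Hcal)\otimes\Acal_r(U)$ iff $V$ commutes with $I_\Hcal\otimes\Acal_r(U)'$; that is, $[I_\Hcal\otimes Y,V]=0$ for all $Y\in\Acal_r(U)'$. Writing $V=(I_\Hcal\otimes W)U$, this becomes a condition relating $W$ to the way $U$ intertwines the algebra $\Acal_r(U)'$. The crucial point is that $\Acal_r(U)$ is by construction the algebra generated by the products $U^*(f_1,g_1)U(f_2,g_2)$, which are precisely the "block entries" of $U^*(X\otimes I_\Kcal)U$; so $U$ conjugates $I_\Hcal\otimes\Acal_r(U)'$ into the commutant of $\Bcal(\Hcal)\otimes\Acal_r(U)$ up to the action of $U$ on the first factor. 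Using the type I structure, I would diagonalize $\Acal_r(U)$ into its homogeneous components and, on each component, use that $\Bcal(\Hcal)\otimes\Bcal(\Hcal_k)$ acts irreducibly to extract a unitary $W_k$ on the multiplicity space $\Kcal_k$ realizing the intertwiner; assembling $W=\bigoplus_k (I_{\Hcal_k}\otimes W_k)$ gives the required correction. The finite-dimensionality of $\Hcal$ is what guarantees the block decompositions are manageable and the intertwiners are genuinely unitary rather than merely isometric. Once $V\in\Bcal(\Hcal)\otimes\Acal_r(U)$ is established, Proposition \ref{propenvironment} gives $\Acal(V)\subset\Acal_r(U)$, while Corollary \ref{corodefdecompaction} gives $\Acal_r(V)\subset\Acal(V)$ and the remark preceding the theorem gives $\Acal_r(V)=\Acal_r(U)$; chaining these inclusions forces the chain of equalities in Equation (\ref{eqtheoaction2}).

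For part (2), suppose $V_1,V_2\in\Rcal_r(U)$ both satisfy Equation (\ref{eqtheoaction2}). Since both lie in $\Rcal_r(U)$, they induce the same action on the system, so applying Lemma \ref{lemclassaction} to the pair $(V_1,V_2)$ directly yields a unitary $W$ on $\Kcal$ with $V_1=(I_\Hcal\otimes W)V_2$, i.e. $V_1V_2^*=I_\Hcal\otimes W$. It remains only to see that $W\in\Acal_r(U)$. This follows because $V_1V_2^*\in(\Bcal(\Hcal)\otimes\Acal_r(U))\cdot(\Bcal(\Hcal)\otimes\Acal_r(U))\subset\Bcal(\Hcal)\otimes\Acal_r(U)$ (using $V_1,V_2\in\Bcal(\Hcal)\otimes\Acal_r(U)$ from Equation (\ref{eqtheoaction2})), and an element of $\Bcal(\Hcal)\otimes\Acal_r(U)$ of the special form $I_\Hcal\otimes W$ must have $W\in\Acal_r(U)$.

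I expect the main obstacle to be part (1): concretely constructing the unitary $W$ and verifying that the type I hypothesis is exactly what is needed to produce a \emph{unitary} (not merely partial-isometric) correction. The delicate point is handling the multiplicity spaces $\Kcal_k$ correctly and checking that the intertwiners glue into a globally unitary $W$ on all of $\Kcal$; this is where the assumptions that $\Hcal$ is finite dimensional and that $\Acal_r(U)$ is type I do the real work.
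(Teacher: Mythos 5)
Part (2) of your argument, and the chain of inclusions deriving Equation (\ref{eqtheoaction2}) from Equation (\ref{eqtheoaction1}), are correct and coincide with the paper's. The genuine gap is in your construction for part (1), specifically in the form you impose on the correcting unitary. Writing $\Acal_r(U)\cong\bigoplus_k\Bcal(\Hcal_k)\otimes I_{\Kcal_k}$ relative to $\Kcal\cong\bigoplus_k\Hcal_k\otimes\Kcal_k$, you take $W=\bigoplus_k(I_{\Hcal_k}\otimes W_k)$ with $W_k$ unitary on the multiplicity space $\Kcal_k$. Such a $W$ lies in $\Acal_r(U)'$, and for $Y\in\Acal_r(U)'$ one has $[I_\Hcal\otimes Y,(I_\Hcal\otimes W)U]=(I_\Hcal\otimes[Y,W])U+(I_\Hcal\otimes W)\,[I_\Hcal\otimes Y,U]$, whose second term has no reason to vanish: $U$ only commutes with $I_\Hcal\otimes\Acal(U)'$, and $\Acal(U)'$ is in general strictly smaller than $\Acal_r(U)'$. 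The paper's own motivating example (\ref{eqexunit1}) refutes your construction outright: with $U_1,U_2$ non-collinear, $\Acal_r(U)$ is the diagonal algebra of $\C^2$, it is its own commutant, the central blocks are one-dimensional, so every $W$ of your prescribed form is diagonal — and then $(I_\Hcal\otimes W)U$ stays antidiagonal as a block matrix and never lands in $\Bcal(\Hcal)\otimes\Acal_r(U)$. The $W$ that actually works there is the flip of Equation (\ref{eqexunit2}), which does \emph{not} preserve the central blocks of $\Acal_r(U)$. The underlying confusion is that it is $V$, not $W=VU^*$, that decomposes along the central decomposition of $\Acal_r(U)$; $U$ itself scrambles the blocks.

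The missing idea — and the paper's actual route — is to decompose not an operator but the map $T:X\mapsto U^*(X\otimes I_\Kcal)U$. Over the central decomposition $\Acal_r(U)=\int_A^\oplus\Bcal(\Kcal_\alpha)\,\Prob(d\alpha)$ one gets normal unital $*$-representations $T_\alpha$ of $\Bcal(\Hcal)$ into $\Bcal(\Hcal)\otimes\Bcal(\Kcal_\alpha)$; since $\Hcal$ is finite dimensional, each $T_\alpha$ is a multiple of the identity representation, $T_\alpha(X)=\sum_\beta V_{\alpha,\beta}XV_{\alpha,\beta}^*$, where the $V_{\alpha,\beta}$ map $\Hcal$ unitarily onto mutually orthogonal subspaces $\Kcal_{\alpha,\beta}$ summing to $\Hcal\otimes\Kcal_\alpha$. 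Because each $\Kcal_{\alpha,\beta}$ has the dimension of $\Hcal$, these isometries assemble (via a choice of orthonormal basis of $\Kcal_\alpha$ indexed by $\beta$) into a genuine unitary $V_\alpha$ on $\Hcal\otimes\Kcal_\alpha$ implementing $T_\alpha$, and $V=\int_A^\oplus V_\alpha\,\Prob(d\alpha)$ is by construction a unitary in $\Bcal(\Hcal)\otimes\Acal_r(U)$ with $V^*(X\otimes I_\Kcal)V=U^*(X\otimes I_\Kcal)U$; only afterwards is $W$ recovered as $VU^*$ via Lemma \ref{lemclassaction}. Your appeal to "irreducibility" and "intertwiners" never identifies which representation is being intertwined with which, and the concrete recipe it yields is, as shown above, false.
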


At least when $\Kcal$ is finite dimensional, for all $f,g\in\Hcal$, one can write the polar decomposition of the operator $U(f,g)$ as

\begin{equation}
U(f,g)=W\sqrt{U(f,g)^*U(f,g)},
\label{eqpolardecomp}
\end{equation}

\noindent where $W$ is a unitary operator on $\Kcal$ that depends on $f$ and $g$. If it were the case, $\left(I_\Hcal\otimes W\right) U$ would be the wanted element of the class. Theorem \ref{theoaction} can thus be interpreted as a kind of polar decomposition of the operator $U$ with respect to the environment, where the operator $\sqrt{U(f,g)^*U(f,g)}$ is replaced by an element of $\Acal_r(U)$. A similar result can be found in \cite{DNP}.

\begin{proof}
The proof is based on the central decomposition of $\Acal_r(U)$ as a type I von Neumann algebra \cite{Tak}. The space $\Kcal$ has a direct integral representation $\Kcal=\int_A^\oplus {\Kcal_\alpha\ \Prob(d\alpha)}$ in the sense that there exists a family of Hilbert space $(\Kcal_\alpha)_{\alpha\in A}$ and for any $\psi\in\Kcal$ there exists a map $\alpha\in A\mapsto \psi_\alpha$ such that
\begin{equation*}
\sca{\psi}{\phi}=\int_A^\oplus{\sca{\psi_\alpha}{\phi_\alpha}\Prob(d\alpha)}.
\end{equation*}
The von Neumann algebra $\Acal_r(U)$ has a central decomposition
\begin{equation*}
\Acal_r(U)=\int_A^\oplus{\Bcal(\Kcal_\alpha) \Prob(d\alpha)}
\end{equation*}
in the sense that for any $Y\in\Acal_r(U)$ there exists a map $\alpha\in A\mapsto Y_\alpha\in\Bcal(\Kcal_\alpha)$ such that $(Y\psi)_\alpha=Y_\alpha\psi_\alpha$ for almost all $\alpha$. Then for all $X\in \Bcal(\Hcal)$,
\begin{equation*}
U^* \left(X\otimes I_\Kcal\right) U=\int_{\alpha\in A}{T_{\alpha}(X)\Prob(d\alpha)},
\end{equation*}
where $T_\alpha$ are linear maps from $\Bcal(\Hcal)$ to $\Bcal(\Hcal)\otimes\Bcal(\Kcal_\alpha)$. We will ignore for simplicity all issues related with measurability in the following constructions. Let us fix $\alpha\in A$. It is a simple computation to show that $T_\alpha$ is a $*$-morphism that preserves unity: it is a representation of $\Bcal(\Hcal)$ into $\Bcal(\Hcal)\otimes\Bcal(\Kcal_\alpha)$, which furthermore is normal. The goal is to write it in the form $T_\alpha(X)=V_\alpha^* \left(X\otimes I_{\Kcal_\alpha}\right)V_\alpha$ for some unitary operator $V_\alpha$ on $\Hcal\otimes\Kcal_\alpha$.
\\As it is a normal representation of $\Hcal$ on $\Hcal\otimes\Kcal_\alpha$, $\Hcal\otimes\Kcal_\alpha$ can be decomposed as the direct sum of some orthogonal spaces $\Kcal_{\alpha,\beta}$ such that (see \cite{Attb} or \cite{Fagn} for instance)
\begin{equation*}
T_{\alpha}(X)=\sum_{\beta\in B}{V_{\alpha,\beta}XV_{\alpha,\beta}^*},
\end{equation*}
where the $V_{\alpha,\beta}$'s are unitary operators from $\Hcal$ into $\Kcal_{\alpha,\beta}$ and $B$ is a finite or countable set. As $\Hcal$ is finite dimensional, so are the $\Kcal_{\alpha,\beta}$ and they have the same dimension than $\Hcal$. It implies that either $B$ is a finite set with $\dim \Kcal_\alpha$ elements if $\Kcal_\alpha$ is finite dimensional, either $B$ is a countable infinite set if $\Kcal_\alpha$ is infinite dimensional. 
\\Either cases, let $(e_\beta)_{\beta\in B}$ be an orthonormal basis of $\Kcal_\alpha$ and define the operator $V_\alpha$ in $\Bcal(\Hcal\otimes\Kcal_\alpha)$ by the formula
\begin{equation*}
V_\alpha^* \left(f\otimes e_\beta\right) = V_{\alpha,\beta}f,\text{ for all }f\in\Hcal\text{ and }\beta\in B.
\end{equation*}
Then one has 
\begin{align*}
V_\alpha^* V_\alpha
& =\sum_{\beta\in B}{V_\alpha^* \left(I_\Hcal\otimes \proj{e_\beta}{e_\beta}\right)V_\alpha} \\
& = \sum_{\beta\in B}{V_{\alpha,\beta}I_\Hcal V_{\alpha,\beta}^*}\\
& = \sum_{\beta\in B}{I_{\Hcal\otimes\Kcal_{\alpha,\beta}}}=I_{\Hcal\otimes\Kcal_\alpha}.
\end{align*}
Similarly $V_\alpha V_\alpha^*=I_{\Hcal\otimes\Kcal_\alpha}$, so that $V_\alpha$ is a unitary operator. Moreover, for all $f,g\in\Hcal$,
\begin{align*}
V_\alpha^* \left(\proj{f}{g}\otimes I_{\Kcal_\alpha}\right) V_\alpha
& = \sum_{\beta\in B} {V_{\alpha,\beta}\proj{f}{g} V_{\alpha,\beta}^*}\\
& =T_\alpha(\proj{f}{g}).
\end{align*}
This formula extends to all $\Bcal(\Hcal)$ by strong continuity.
\\Now write $V=\int_{\alpha\in A}{V_\alpha\Prob(d\alpha)}$. By construction it is a unitary operator in $\Bcal(\Hcal)\otimes\Acal_r(U)$ and for all $X\in\Bcal(\Hcal)$,
\begin{equation*}
U^* \left(X\otimes I_\Kcal\right) U = V^* \left(X\otimes I_\Kcal\right) V.
\end{equation*}
Now by Lemma \ref{lemunittransfom}, we have $\Acal_r(V)=\Acal_r(U)$ and, as $V\in\Acal_r(V)$, we have $\Acal(V)\subset\Acal_r(V)$ by Proposition \ref{propenvironment}. Consequently $\Acal_r(V)=\Acal(V)$, so point 1) in the Theorem is proved.
Let us prove point 2). If $V_1$ and $V_2$ both satisfy Equation (\ref{eqtheoaction2}), then, again by Lemma \ref{lemunittransfom}, there exists a unitary operator $W$ on $\Kcal$ such that $V_1=I_\Hcal\otimes W V_2$. Then $V_1 V_2^*=I_\Hcal\otimes W$, so that $V_1 V_2^*\in I_\Hcal\otimes\Bcal(\Kcal)$. As $V_1,V_2\in\Bcal(\Hcal)\otimes\Acal_r(U)$, necessarily $W\in\Acal_r(U)$ and the result follows.
\end{proof}

We obtain the following characterization for commutative Environment Right-Action Algebra for a finite dimensional environment:

\begin{coro}
Suppose that $\Hcal$ and $\Kcal$ are finite dimensional, with dimension $N$ and $d$ respectively. Let $U$ be a unitary operator on $\Hcal\otimes\Kcal$. Then $\Acal_r(U)$ is commutative if and only if there exist two orthonormal basis $(\varphi_i)$ and $(\psi_i)$ of $\Kcal$ and unitary operators $U_1,...,U_d$ on $\Hcal$ such that
\begin{equation}
U=\sum_{i=1}^d{U_i\otimes\proj{\varphi_i}{\psi_i}}.
\label{eqcoroaction}
\end{equation}
\label{coroaction}
\end{coro}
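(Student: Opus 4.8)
The plan is to deduce the corollary from the structural results already in place, handling the two implications separately and exploiting the extra unitary on $\Kcal$ to pass between the diagonal form (\ref{eqexcom1}) and the more general form (\ref{eqcoroaction}). For the easy implication, suppose $U=\sum_{i=1}^d U_i\otimes\proj{\varphi_i}{\psi_i}$. First I would introduce the unitary $W$ on $\Kcal$ determined by $W\psi_i=\varphi_i$ (unitary since both families are orthonormal bases) and rewrite $\proj{\varphi_i}{\psi_i}=W\proj{\psi_i}{\psi_i}$, so that $U=(I_\Hcal\otimes W)\,V$ with $V=\sum_{i=1}^d U_i\otimes\proj{\psi_i}{\psi_i}$. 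The operator $V$ sits in the diagonal (hence commutative) algebra attached to the basis $(\psi_i)$, so $\Acal(V)$ is commutative. A short computation shows $U^*(X\otimes I_\Kcal)U=V^*(I_\Hcal\otimes W^*)(X\otimes I_\Kcal)(I_\Hcal\otimes W)V=V^*(X\otimes I_\Kcal)V$ for all $X$, since $I_\Hcal\otimes W$ commutes with $X\otimes I_\Kcal$; thus $V\in\Rcal_r(U)$. By the remark following Lemma \ref{lemclassaction} the Right-Action Algebra is constant on the class $\Rcal_r(U)$, and by Corollary \ref{corodefdecompaction} one has $\Acal_r(V)\subset\Acal(V)$. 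Combining these, $\Acal_r(U)=\Acal_r(V)\subset\Acal(V)$, which is commutative, so $\Acal_r(U)$ is commutative.

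For the converse, assume $\Acal_r(U)$ is commutative. Since both $\Hcal$ and $\Kcal$ are finite dimensional, $\Acal_r(U)$ is automatically type I, so Theorem \ref{theoaction} applies and provides a unitary $V\in\Rcal_r(U)$ with $V\in\Bcal(\Hcal)\otimes\Acal_r(U)$ and $\Acal(V)=\Acal_r(U)$ commutative. I would then apply Proposition \ref{propenvcom} in its finite-dimensional form (\ref{eqexcom1}) to $V$, writing $V=\sum V_i\otimes P_i$ with the $P_i$ mutually orthogonal projections summing to $I_\Kcal$ and the $V_i$ unitary; refining each $P_i$ into rank-one projections yields an orthonormal basis $(\psi_i)_{1\le i\le d}$ of $\Kcal$ and unitaries $U_i$ (possibly repeated) with $V=\sum_{i=1}^d U_i\otimes\proj{\psi_i}{\psi_i}$. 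Finally, since $V\in\Rcal_r(U)$, Lemma \ref{lemclassaction} gives a unitary $W$ on $\Kcal$ with $U=(I_\Hcal\otimes W^*)V$, whence
\begin{equation*}
U=\sum_{i=1}^d U_i\otimes W^*\proj{\psi_i}{\psi_i}=\sum_{i=1}^d U_i\otimes\proj{W^*\psi_i}{\psi_i}.
\end{equation*}
Setting $\varphi_i=W^*\psi_i$, which is again an orthonormal basis because $W^*$ is unitary, gives exactly the form (\ref{eqcoroaction}).

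The main obstacle is entirely concentrated in the forward implication, and it is resolved by Theorem \ref{theoaction}: the genuinely nontrivial step is producing a representative $V$ of the action-class $\Rcal_r(U)$ whose Environment Algebra is commutative and coincides with $\Acal_r(U)$. Once that representative is available, the remainder is the routine finite-dimensional diagonalization of a commutative von Neumann algebra together with the bookkeeping observation that the residual unitary $W$ on $\Kcal$ is precisely what converts the equal left and right bases of the diagonal form into the two possibly distinct bases $(\varphi_i)$ and $(\psi_i)$. I would take care to note that refining the $P_i$ into rank-one projections may force repetitions among the $U_i$, which is harmless, and that finiteness of $\Hcal$ enters only through the type I hypothesis of Theorem \ref{theoaction}; the easy implication in fact needs no dimensional assumption at all.
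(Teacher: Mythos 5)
Your proof is correct and, for the substantive implication (commutativity of $\Acal_r(U)$ implies the form (\ref{eqcoroaction})), follows exactly the paper's route: invoke Theorem \ref{theoaction} to produce a representative $V\in\Rcal_r(U)$ with $\Acal(V)=\Acal_r(U)$ commutative, diagonalize $V$ via the finite-dimensional form of Proposition \ref{propenvcom}, and absorb the connecting unitary $W$ from Lemma \ref{lemclassaction} into the left basis by setting $\varphi_i=W^*\psi_i$. You additionally write out the easy converse (form implies commutativity), which the paper's proof leaves implicit, and that argument is also sound.
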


\begin{proof}
We first show that (i)$\Rightarrow(ii)$. Because of Theorem \ref{theoaction} there exists a unitary operator $W$ on $\Kcal$ such that
\\$\Acal\left(I_\Hcal\otimes W\ U\right)=\Acal_r(U)$. Consequently, as $\Acal_r(U)$ is commutative, there exist an orthonormal basis $(\psi_i)$ of $\Kcal$ and unitary operators $U_1,...,U_d$ on $\Hcal$ such that
\begin{equation*}
I_\Hcal\otimes W U=\sum_{i=1}^d{U_i\otimes\proj{\psi_i}{\psi_i}}.
\end{equation*}
Write $\varphi_i=W^*\psi_i$ for all $i$. Then $(\varphi_i)$ is an orthonormal basis of $\Kcal$ and Equation (\ref{eqcoroaction}) holds.
\end{proof}

\label{sect22}
\subsection{Minimal Stinespring representation}

There is a link between the structure of $\Acal_r(U)$ and dilation of CP maps obtained via $U$. In particular we obtain a sufficient condition so that $\Acal_r(U)=\Bcal(\Kcal)$. First we recall some definitions.

\begin{de}
Let $\Lcal$ be a normal CP map on $\Hcal$, and $V$ be an isometry from $\Hcal$ to $\Hcal\otimes\Kcal$, such that, for all $X\in\Bcal(\Hcal)$, we have:
\begin{equation*}
\Lcal(X)=V^* \left(X\otimes I_\Kcal\right) V.
\end{equation*}
\\The couple $(\Hcal\otimes\Kcal,V)$ is called a \emph{Stinespring representation} of $\Lcal$. Furthermore, this representation is defined to be \emph{minimal} if the set
\begin{equation}
\Vcal=\left\{(X\otimes I) U f\otimes \psi,\ X\in\Bcal(\Hcal),f\in\Hcal\right\}
\label{eqdedilmin}
\end{equation}
\noindent is total in $\Hcal\otimes\Kcal$.
\label{dedilmin}
\end{de}

\noindent By Stinespring Theorem (\cite{Sti}) every normal CP map has a minimal Stinespring representation of this form (see \cite{Attb} and \cite{Fagn}). Furthermore, if $(\Hcal\otimes\Kcal_1,V_1)$ and $(\Hcal\otimes\Kcal_2,V_2)$ are two minimal Stinespring representations of $\Lcal$, then there exists a unitary operator $W$ from $\Kcal_1$ to $\Kcal_2$ such that for all $X\in\Bcal(\Hcal)$:

\begin{equation*}
\left(I_\Hcal\otimes W\right) V_1 = V_2.
\end{equation*}

\noindent Now we come back to the case of a unitary operator $U$ on the bipartite space $\Hcal\otimes\Kcal$. For every pure state $\psi\in\Kcal$ define the isometry $U_\psi:\Hcal\to\Hcal\otimes\Kcal$ by

\begin{equation}
U_\psi:f\mapsto U f\otimes \psi.
\label{equnitarydil}
\end{equation}

\noindent We remark that the couple $(\Hcal\otimes\Kcal,U_\psi)$ is a Stinespring representation of the CP map $\Lcal_{\proj{\psi}{\psi}}$ defined by

\begin{equation}
\Lcal_{\proj{\psi}{\psi}}:X\in\Bcal(\Hcal)\to\Tr_{\proj{\psi}{\psi}}\left[U^*X\otimes I_\Kcal U\right].
\end{equation}

\noindent A direct application of Lemma \ref{lemclassaction} is that $V\in\Rcal_r(U)$ if and only if $\Tr_{\proj{\psi}{\psi}}\left[V^*\cdot\otimes I_\Kcal V\right]=\Lcal_{\proj{\psi}{\psi}}(\cdot)$ for all pure state $\psi\in\Kcal$ (which was the initial Lemma in \cite{DNP}). The following proposition completes this statement.

\begin{prop}
The property for $(\Hcal\otimes\Kcal,U_\psi)$ to give a minimal Stinespring representation of $\Lcal_{\proj{\psi}{\psi}}$ is a property of the class $\Rcal_r(U)$, i.e. if $(\Hcal\otimes\Kcal,U_\psi)$ is minimal, then for all $V\in\Rcal_r(U)$, the Stinespring representation $(\Hcal\otimes\Kcal,V_\psi)$ is also minimal.
\label{propminimaldil}
\end{prop}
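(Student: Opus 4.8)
The plan is to use the explicit description of the equivalence class $\Rcal_r(U)$ furnished by Lemma \ref{lemclassaction}, and to observe that passing from $U$ to $V$ only transforms the relevant total set by a unitary of the form $I_\Hcal\otimes W$, which cannot affect totality. First I would fix $V\in\Rcal_r(U)$ and invoke Lemma \ref{lemclassaction} to obtain a unitary operator $W$ on $\Kcal$ with $V=(I_\Hcal\otimes W)U$. Writing $\Vcal_U$ and $\Vcal_V$ for the two sets of the form (\ref{eqdedilmin}) built from the isometries $U_\psi$ and $V_\psi$ respectively, namely
\begin{equation*}
\Vcal_U=\left\{(X\otimes I_\Kcal)\,U(f\otimes\psi),\ X\in\Bcal(\Hcal),\ f\in\Hcal\right\}
\end{equation*}
and similarly for $\Vcal_V$ with $V$ in place of $U$, the hypothesis is that $\Vcal_U$ is total in $\Hcal\otimes\Kcal$ and the goal is to prove the same for $\Vcal_V$.

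The key step is a one-line computation resting on the fact that $X\otimes I_\Kcal$ and $I_\Hcal\otimes W$ act on different tensor factors and hence commute: for every $X\in\Bcal(\Hcal)$ and $f\in\Hcal$,
\begin{align*}
(X\otimes I_\Kcal)\,V(f\otimes\psi)
&=(X\otimes I_\Kcal)(I_\Hcal\otimes W)\,U(f\otimes\psi)\\
&=(I_\Hcal\otimes W)\,(X\otimes I_\Kcal)\,U(f\otimes\psi).
\end{align*}
This identity says precisely that $\Vcal_V=(I_\Hcal\otimes W)\,\Vcal_U$, i.e. the second total set is the image of the first under the unitary $I_\Hcal\otimes W$ on $\Hcal\otimes\Kcal$.

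Finally I would close the argument by recalling that $I_\Hcal\otimes W$ is a bounded bijection with bounded inverse, so it carries a total set to a total set (the closed linear span of the image equals the image of the closed linear span). Hence the totality of $\Vcal_U$ forces the totality of $\Vcal_V$, which is exactly the assertion that $(\Hcal\otimes\Kcal,V_\psi)$ is again minimal. I do not expect a genuine obstacle here: once Lemma \ref{lemclassaction} supplies the intertwining unitary $W$, everything reduces to the commutation of the two tensor factors and the elementary fact that unitaries preserve totality. The only point meriting a moment of care is to factor $I_\Hcal\otimes W$ out on the correct (left) side, so that it acts as a single fixed unitary independent of $X$ and $f$.
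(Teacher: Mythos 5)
Your proof is correct and follows essentially the same route as the paper: both invoke Lemma \ref{lemclassaction} to write $V=(I_\Hcal\otimes W)U$, commute $X\otimes I_\Kcal$ past $I_\Hcal\otimes W$, and conclude because a unitary preserves totality (the paper merely phrases this last step via the orthogonal complement, showing any $x\perp\Vcal_\psi(V)$ satisfies $(I\otimes W^*)x\perp\Vcal_\psi(U)$ and hence vanishes). There is no substantive difference between the two arguments.
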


\begin{proof}
Let $\psi\in\Kcal$ be a pure state. For all unitary operators $V$ on $\Kcal$, we write
\begin{equation*}
\Vcal_\psi(V)=\left\{(X\otimes I) V f\otimes\psi,\ X\in\Bcal(\Hcal),f\in\Hcal\right\}.
\end{equation*}
Suppose that $(\Hcal\otimes\Kcal,U_\psi)$ is a minimal Stinespring representation of $\Lcal_{\proj{\psi}{\psi}}$. By definition it means that $\Vcal_\psi(U)$ is total in $\Hcal\otimes\Kcal$. Take $V\in\Rcal_r(U)$. Because of Lemma \ref{theoaction}, there exists a unitary operator $W$ on $\Kcal$ such that $V=\left(I_\Hcal\otimes W\right) U$. Let $x$ be an element of $\Vcal_\psi(V)^\perp$. Then for all $X\in\Bcal(\Hcal)$ and $f\in\Hcal$:
\begin{equation*}
0 = \sca{x}{(X\otimes W)U\left(f\otimes\psi\right)} = \sca{\left(I\otimes W^*\right)x}{(X\otimes I)U\left(f\otimes\psi\right)}.
\end{equation*}
\noindent As $\Vcal_\psi(U)$ is total in $\Hcal\otimes\Kcal$, this implies that $\left(I\otimes W^*\right)x=0$, and consequently $x=0$.
\end{proof}

Let $\Acal$ be a von Neumann subalgebra of $\Bcal(\Kcal)$. We recall that a vector $\psi\in\Kcal$ is cyclic for $\Acal$ if $\overline{\Acal\psi}=\Kcal$. There does not always exist a cyclic vector. For example, a commutative algebra $\Acal$ has a cyclic vector if and only if $\Acal=\Acal'$ (\cite{R-S}).

We now suppose that $\Hcal$ is finite dimensional and that $\Acal_r(U)$ is a type I von Neumann algebra in order to be in position to apply Theorem \ref{theoaction}

\begin{prop}
If $(\Hcal\otimes\Kcal,U_\psi)$ is a minimal Stinespring representation of $\Lcal_{\proj{\psi}{\psi}}$, then $\psi$ is a cyclic vector for $\Acal_r(U)$.
\label{propcyclic}
\end{prop}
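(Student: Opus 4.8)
The plan is to transport the question to a better-adapted unitary operator in the class $\Rcal_r(U)$ and then run an invariance argument. Since $\Hcal$ is finite dimensional and $\Acal_r(U)$ is type I, Theorem \ref{theoaction} supplies a unitary $V\in\Rcal_r(U)$ with $V\in\Bcal(\Hcal)\otimes\Acal_r(U)$ and $\Acal(V)=\Acal_r(V)=\Acal_r(U)$. Because minimality of the Stinespring representation is a property of the whole class $\Rcal_r(U)$ (Proposition \ref{propminimaldil}), the representation $(\Hcal\otimes\Kcal,V_\psi)$ is again minimal; that is, the set $\Vcal_\psi(V)=\{(X\otimes I)\,V(f\otimes\psi)\ :\ X\in\Bcal(\Hcal),\ f\in\Hcal\}$ is total in $\Hcal\otimes\Kcal$. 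I would work with $V$ instead of $U$ from here on, the whole point being that $V$ itself lies in the tensor-product algebra $\Bcal(\Hcal)\otimes\Acal_r(U)$.

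Next I would introduce the closed subspace $\Kcal_0:=\overline{\Acal_r(U)\psi}$ and show that $\Vcal_\psi(V)\subseteq\Hcal\otimes\Kcal_0$. Since $\Acal_r(U)$ is a von Neumann algebra it contains $I_\Kcal$ and is stable under products, so $\Kcal_0$ is invariant under $\Acal_r(U)$; consequently $\Hcal\otimes\Kcal_0$ is invariant under every element of $\Bcal(\Hcal)\otimes\Acal_r(U)$, and in particular under $V$. Concretely, using $\Acal(V)=\Acal_r(U)$ one has $V(e_i,e_j)\in\Acal_r(U)$ for an orthonormal basis $(e_i)$ of $\Hcal$, so that the finite sum
\begin{equation*}
V(f\otimes\psi)=\sum_{i,j}\sca{e_j}{f}\,e_i\otimes V(e_i,e_j)\psi
\end{equation*}
lies in $\Hcal\otimes\Kcal_0$, because each $V(e_i,e_j)\psi\in\Acal_r(U)\psi\subseteq\Kcal_0$. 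Applying the operators $X\otimes I_\Kcal$, which act trivially on the $\Kcal$-factor and hence preserve $\Hcal\otimes\Kcal_0$, shows that every element of $\Vcal_\psi(V)$ stays in $\Hcal\otimes\Kcal_0$.

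Finally I would combine the two facts: on one hand $\overline{\text{span}}\,\Vcal_\psi(V)\subseteq\Hcal\otimes\Kcal_0$, and on the other hand $\overline{\text{span}}\,\Vcal_\psi(V)=\Hcal\otimes\Kcal$ by minimality. Hence $\Hcal\otimes\Kcal_0=\Hcal\otimes\Kcal$, which forces $\Kcal_0=\Kcal$, i.e. $\psi$ is cyclic for $\Acal_r(U)$. The argument is short, and its only real content is the reduction step: the genuine work is already packaged in Theorem \ref{theoaction} (producing a representative of the class inside $\Bcal(\Hcal)\otimes\Acal_r(U)$) and in Proposition \ref{propminimaldil} (preservation of minimality along the class). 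Once those are in hand, the delicate-looking point -- that $V$ preserves $\Hcal\otimes\Kcal_0$ -- is immediate from the invariance of $\Kcal_0$ under $\Acal_r(U)$, so I do not anticipate a serious obstacle beyond correctly invoking the finite-dimensionality and type I hypotheses that make Theorem \ref{theoaction} applicable.
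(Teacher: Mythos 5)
Your proof is correct and follows essentially the same route as the paper's: both reduce to a representative $V\in\Rcal_r(U)$ via Theorem \ref{theoaction}, transport minimality via Proposition \ref{propminimaldil}, and exploit that the matrix elements $V(e_i,e_j)$ lie in $\Acal(V)=\Acal_r(U)$. The only cosmetic difference is that the paper phrases the last step by duality (showing any $\varphi$ orthogonal to $\{V(f,g)\psi\}$ must vanish), whereas you argue directly that $\Vcal_\psi(V)\subseteq\Hcal\otimes\overline{\Acal_r(U)\psi}$; the content is identical.
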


\begin{proof}
First, by Theorem \ref{theoaction}, there exists $V\in\Rcal_r(U)$ such that $\Acal(V)=\Acal_r(U)$. Recall that $\Acal(V)$ is generated by the elements $V(f,g)$ defined by Equation (\ref{eqU(f,g)}), $f,g\in\Hcal$. Consequently the subset of $\Kcal$
\begin{equation*}
\Kcal_\psi(V)=\left\lbrace V(f,g)\psi,\ f,g\in\Hcal\right\rbrace,
\end{equation*}
is a subset of $\Acal_r(U)\psi$. Now if $(\Hcal\otimes\Kcal,U_\psi)$ is a minimal Stinespring representation of $\Lcal_{\proj{\psi}{\psi}}$, it is also the case for $(\Hcal\otimes\Kcal,V_\psi)$ by Proposition \ref{propminimaldil}, so that $\Vcal_\psi(V)$ is a total set. Let us show that $\Kcal_\psi(V)$ is also a total set, to complete the proof. Take $\varphi\in\Kcal_\psi(V)^\perp$. Then for all $f,g\in\Hcal$ we have:
\begin{align*}
0 & = \sca{\varphi}{V(f,g)\psi} \\
& = \Tr\left[\proj{\psi}{\varphi}V(f,g)\right] \\
& = \Tr\left[\left(\proj{g}{f}\otimes\proj{\psi}{\varphi}\right)V\right] \\
& = \sca{f\otimes\varphi}{Vg\otimes\psi}.
\end{align*}
Consequently, for all $f\in\Hcal$, $f\otimes\varphi\in\Vcal_\psi^\perp=\{0\}$. Thus $\varphi=0$, which concludes the proof.
\end{proof}

\noindent The following corollary is now straightforward.

\begin{coro}
If for all $\psi\in\Kcal$, $(\Hcal\otimes\Kcal,U_\psi)$ is a minimal Stinespring representation of $\Lcal_{\proj{\psi}{\psi}}$, then $\Acal_r(U)=\Bcal(\Kcal)$.
\label{corocyclic}
\end{coro}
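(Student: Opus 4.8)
The plan is to combine Proposition \ref{propcyclic} with a general structural fact: a von Neumann algebra for which every nonzero vector is cyclic must be the whole of $\Bcal(\Kcal)$. First I would observe that the hypothesis of the corollary, together with Proposition \ref{propcyclic}, yields that \emph{every} nonzero $\psi\in\Kcal$ is cyclic for $\Acal_r(U)$. By the standard duality between cyclicity and the separating property, this is equivalent to saying that every nonzero vector is separating for the commutant $\Acal_r(U)'$; that is, for $Y\in\Acal_r(U)'$ and $\psi\neq0$, the relation $Y\psi=0$ forces $Y=0$.

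The next step is to deduce from this that $\Acal_r(U)'=\C I_\Kcal$. The separating property says precisely that every nonzero element of $\Acal_r(U)'$ is injective. I would apply this to projections: if $P\in\Acal_r(U)'$ is an orthogonal projection with $P\neq I_\Kcal$, then its kernel (the range of $I_\Kcal-P$) is nonzero, so $P$ is not injective and hence, being a nonzero element would contradict injectivity, we must have $P=0$. Thus $\Acal_r(U)'$ contains no projection other than $0$ and $I_\Kcal$. Since a von Neumann algebra is generated by its orthogonal projections (every self-adjoint element being recovered from its spectral projections, which here are all scalar), this forces every self-adjoint element of $\Acal_r(U)'$ to be a multiple of $I_\Kcal$, and therefore $\Acal_r(U)'=\C I_\Kcal$.

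Finally, by the Bicommutant Theorem, $\Acal_r(U)=\left(\Acal_r(U)'\right)'=\left(\C I_\Kcal\right)'=\Bcal(\Kcal)$, which is the desired conclusion.

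The only genuine content lies in the middle step, and I expect the main (minor) obstacle to be the careful justification of the cyclic/separating duality: one checks that the orthogonal projection onto $\overline{\Acal_r(U)\psi}$ belongs to $\Acal_r(U)'$, because $\overline{\Acal_r(U)\psi}$ is invariant under the $*$-algebra $\Acal_r(U)$ and hence so is its orthogonal complement; that this projection fixes $\psi$; and that any vector annihilated by some $Y\in\Acal_r(U)'$ must be orthogonal to this invariant subspace. Cyclicity then pins the projection to $I_\Kcal$ and yields $Y=0$. Everything else reduces to a direct application of the Bicommutant Theorem.
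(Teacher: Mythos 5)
Your proof is correct and takes essentially the same route as the paper: Proposition \ref{propcyclic} gives that every nonzero vector of $\Kcal$ is cyclic for $\Acal_r(U)$, and the paper then simply invokes the fact that a subalgebra of $\Bcal(\Kcal)$ for which every vector is cyclic must be all of $\Bcal(\Kcal)$ --- the fact you prove in detail via the cyclic/separating duality, the absence of nontrivial projections in $\Acal_r(U)'$, and the Bicommutant Theorem. One small slip in your closing aside: if $Y\in\Acal_r(U)'$ satisfies $Y\psi=0$, the correct statement is that $\ker Y$ \emph{contains} $\overline{\Acal_r(U)\psi}$ (since $YA\psi=AY\psi=0$ for all $A\in\Acal_r(U)$), not that vectors annihilated by $Y$ are orthogonal to that subspace; the conclusion $Y=0$ is unaffected.
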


\begin{proof}
It just comes from the fact that if $\Acal$ is a subalgebra of $\Bcal(\Kcal)$ such that $\Acal\psi$ is total for all $\psi\in\Kcal$, then $\Acal=\Bcal(\Kcal)$.
\end{proof}

\label{sect23}
\subsection{A characterization of the algebras in terms of the spectrum of a CP map}

In this Subsection, $\Hcal$ and $\Kcal$ are finite dimensional Hilbert spaces, of dimension $N$ and $d$ respectively. We give a characterization of the algebras $\Acal(U)$ and $\Acal_r(U)$ in terms of the spectrum of a specific CP map $\Lcal$ acting on $\Bcal(\Kcal)$. This CP map describes the evolution of the states of $\Kcal$, i.e. density matrices on $\Kcal$, whenever the state of the system is the maximally mixed state $\frac{1}{N}I_\Hcal$. Thus $\Lcal$ is given by the following formula, where $X\in\Bcal(\Kcal)$:

\begin{equation}
\Lcal(X)=\Tr_\Hcal\left[U\left( \frac{1}{N}I_\Hcal\otimes X\right) U^*\right].
\label{eqCP1}
\end{equation}

\noindent We will also need the adjoint of $\Lcal$ for the Hilbert-Schmidt scalar product on $\Bcal(\Kcal)$ $(X,Y)\mapsto \Tr[X^*Y]$:

\begin{equation}
\Lcal^*(X)=\Tr_\Hcal\left[U^*\left( \frac{1}{N}I_\Hcal\otimes X\right) U\right].
\label{eqCP2}
\end{equation}

\noindent Our goal is to give an explicit way to compute the algebras, by relating them to some eigenspaces. Our result is the following:

\begin{theo}
Suppose that $\Hcal\approx\C^N$ and $\Kcal\approx\C^{d}$. Then:
\begin{enumerate}
\item[1)] $\Acal(U)'$ is the eigenspace of both $\Lcal$ and $\Lcal^*$, associated to the eigenvalue $1$;
\item[2)] $\Acal_r(U)'$ is the right-singularspace of $\Lcal$ associated to the right-eigenvalue $1$.
\end{enumerate}
\label{theocharaspect}
\end{theo}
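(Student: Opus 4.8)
The plan is to run both parts through a single device: the trace-normalized conditional expectation
\[
\Ebb_\Kcal:\Bcal(\Hcal\otimes\Kcal)\to I_\Hcal\otimes\Bcal(\Kcal),\qquad
\Ebb_\Kcal[Z]=I_\Hcal\otimes\tfrac1N\Tr_\Hcal[Z].
\]
First I would record the three facts I intend to use repeatedly. (i) Both $\Lcal$ and $\Lcal^*$ are \emph{doubly stochastic} (unital and trace-preserving) CP maps: unitality and trace-preservation are one-line computations from (\ref{eqCP1})--(\ref{eqCP2}) using $UU^*=U^*U=I$, and complete positivity is visible from the Kraus form $\Lcal^*(X)=\frac1N\sum_{a,b}U(e_a,e_b)^*XU(e_a,e_b)$ coming from the block decomposition $U=\sum_{ij}\proj{e_i}{e_j}\otimes U(e_i,e_j)$; hence both satisfy the Kadison--Schwarz inequality $\Lcal(X^*X)\ge\Lcal(X)^*\Lcal(X)$. (ii) $I_\Hcal\otimes\Lcal(Y)=\Ebb_\Kcal\!\left[U(I_\Hcal\otimes Y)U^*\right]$ and $I_\Hcal\otimes\Lcal^*(Y)=\Ebb_\Kcal\!\left[U^*(I_\Hcal\otimes Y)U\right]$. (iii) Reformulation of the two commutants: by Proposition~\ref{propenvironment}, $Y\in\Acal(U)'$ iff $U^*(I_\Hcal\otimes Y)U=I_\Hcal\otimes Y$ \emph{and} $U(I_\Hcal\otimes Y)U^*=I_\Hcal\otimes Y$; and by Proposition~\ref{propenvironmentact}, $Y\in\Acal_r(U)'$ iff $U(I_\Hcal\otimes Y)U^*$ commutes with $\Bcal(\Hcal)\otimes I_\Kcal$, i.e. lies in $I_\Hcal\otimes\Bcal(\Kcal)$, in which case applying $\Ebb_\Kcal$ forces it to equal $I_\Hcal\otimes\Lcal(Y)$. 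Here the term ``right-singularspace of $\Lcal$ for the right-eigenvalue $1$'' is read as $\{Y:\Lcal^*\Lcal(Y)=Y\}$, equivalently the $Y$ with $\norm{\Lcal(Y)}_{HS}=\norm{Y}_{HS}$.

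The easy inclusions follow at once. If $Y\in\Acal(U)'$ then by (iii) $U^*(I_\Hcal\otimes Y)U=I_\Hcal\otimes Y$, and applying $\Ebb_\Kcal$ gives $\Lcal^*(Y)=Y$; symmetrically $\Lcal(Y)=Y$, which proves ``$\Acal(U)'\subset\operatorname{Fix}(\Lcal)\cap\operatorname{Fix}(\Lcal^*)$''. If $Y\in\Acal_r(U)'$ then by (iii) $U(I_\Hcal\otimes Y)U^*=I_\Hcal\otimes\Lcal(Y)$; conjugating back by $U^*$ shows $U^*(I_\Hcal\otimes\Lcal(Y))U=I_\Hcal\otimes Y$, and applying $\Ebb_\Kcal$ yields $\Lcal^*\Lcal(Y)=Y$, giving ``$\Acal_r(U)'\subset\{Y:\Lcal^*\Lcal(Y)=Y\}$''.

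The substance is the two converses, and both rest on one rigidity lemma. Since $\Ebb_\Kcal$ is a conditional expectation (a bimodule map over its range $I_\Hcal\otimes\Bcal(\Kcal)$), one has the variance identity
\[
\Ebb_\Kcal[Z^*Z]=\Ebb_\Kcal[Z]^*\Ebb_\Kcal[Z]+\Ebb_\Kcal\!\big[(Z-\Ebb_\Kcal[Z])^*(Z-\Ebb_\Kcal[Z])\big],
\]
and because $\Ebb_\Kcal$ is trace-preserving, hence faithful, the last term vanishes iff $Z=\Ebb_\Kcal[Z]$. Thus $Z\in I_\Hcal\otimes\Bcal(\Kcal)$ \emph{iff} $\Ebb_\Kcal[Z^*Z]=\Ebb_\Kcal[Z]^*\Ebb_\Kcal[Z]$. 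I would apply this to $Z=U^*(I_\Hcal\otimes Y)U$ and to $Z'=U(I_\Hcal\otimes Y)U^*$. Using $Z^*Z=U^*(I_\Hcal\otimes Y^*Y)U$ and (ii), the criterion for $Z$ becomes exactly $\Lcal^*(Y^*Y)=\Lcal^*(Y)^*\Lcal^*(Y)$, and the criterion for $Z'$ becomes $\Lcal(Y^*Y)=\Lcal(Y)^*\Lcal(Y)$ --- in each case precisely the statement that $Y$ lies in the \emph{multiplicative domain} of the relevant map.

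It remains to supply these multiplicative-domain equalities from the spectral hypotheses, and this is the one genuinely delicate point, where double-stochasticity is indispensable. For part 1, from $\Lcal^*(Y)=Y$ the Schwarz inequality gives $\Lcal^*(Y^*Y)\ge Y^*Y$; this difference is positive and, by trace-preservation of $\Lcal^*$, has zero trace, hence vanishes, so $\Lcal^*(Y^*Y)=Y^*Y=\Lcal^*(Y)^*\Lcal^*(Y)$. The rigidity lemma then gives $U^*(I_\Hcal\otimes Y)U=I_\Hcal\otimes Y$; running the symmetric argument from $\Lcal(Y)=Y$ with $Z'$ gives $U(I_\Hcal\otimes Y)U^*=I_\Hcal\otimes Y$, so $Y\in\Acal(U)'$ by (iii). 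For part 2, from $\Lcal^*\Lcal(Y)=Y$ one has $\norm{\Lcal(Y)}_{HS}=\norm{Y}_{HS}$; Schwarz gives $\Lcal(Y^*Y)\ge\Lcal(Y)^*\Lcal(Y)$, and tracing out, $\Tr[\Lcal(Y^*Y)]=\Tr[Y^*Y]=\norm{Y}_{HS}^2=\norm{\Lcal(Y)}_{HS}^2=\Tr[\Lcal(Y)^*\Lcal(Y)]$, so again a positive operator with zero trace vanishes and $\Lcal(Y^*Y)=\Lcal(Y)^*\Lcal(Y)$. The rigidity lemma applied to $Z'$ then yields $U(I_\Hcal\otimes Y)U^*=I_\Hcal\otimes\Lcal(Y)\in I_\Hcal\otimes\Bcal(\Kcal)$, hence $Y\in\Acal_r(U)'$. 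The main obstacle is exactly this jump from a \emph{scalar} spectral condition to an \emph{operator} multiplicative-domain identity; it is forced by the ``positive operator of zero trace is zero'' argument, which is why both unitality and trace-preservation (and the finite dimensionality ensuring $\Ebb_\Kcal$ is faithful and all traces are finite) are essential. Notably, this route proves part 2 directly, without invoking the polar-type normal form of Theorem~\ref{theoaction}.
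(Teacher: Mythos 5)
Your proof is correct, and it takes a genuinely different route from the paper's. The easy inclusions are essentially identical in both arguments (apply the normalized partial trace to the operator identities coming from Propositions \ref{propenvironment} and \ref{propenvironmentact}). For the converses, the paper first reduces to density matrices --- using that the fixed and singular spaces are $*$-stable, hence spanned by their positive elements --- and then invokes the entropic rigidity of Lemma \ref{lementropy}: $\Lcal$ increases von Neumann entropy, with equality if and only if $U\left(\frac{1}{N}I_\Hcal\otimes\omega\right)U^*=\frac{1}{N}I_\Hcal\otimes\Lcal(\omega)$, the equality case being extracted from the equality cases of concavity and subadditivity of entropy. You replace this with an algebraic rigidity: the variance identity for the conditional expectation $\Ebb_\Kcal$ shows that $Z\in I_\Hcal\otimes\Bcal(\Kcal)$ iff $\Ebb_\Kcal[Z^*Z]=\Ebb_\Kcal[Z]^*\Ebb_\Kcal[Z]$, and the required multiplicative-domain equality $\Lcal(Y^*Y)=\Lcal(Y)^*\Lcal(Y)$ is forced by Kadison--Schwarz together with trace preservation (a positive operator of zero trace vanishes). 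Both proofs hinge on exactly the same two ingredients --- double stochasticity of $\Lcal$ and finite dimensionality --- but yours works directly with arbitrary $Y$ (no reduction to positive elements), sidesteps the delicate equality cases of the entropy inequalities which the paper states without detailed justification, and, as you note, establishes part 2 without any appeal to Theorem \ref{theoaction}. What the paper's route buys in exchange is the quantitative inequality (\ref{eqlementropy1}), which is of independent interest. A minor remark: in part 1 you use both hypotheses $\Lcal(Y)=Y$ and $\Lcal^*(Y)=Y$; this is exactly what the statement supplies (and in fact either implies the other, since a doubly stochastic map is a Hilbert--Schmidt contraction), so nothing is lost.
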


The proof of Theorem \ref{theocharaspect} will make use of Lemma \ref{lementropy} below, in which we use the notions of Shannon entropy and von Neumann entropy. The Shannon entropy of a discrete probability measure $(p_i)$ is defined by:

\begin{equation}
H(p_i)=-\sum_i{\phi(p_i)},
\label{eqShannonentropy}
\end{equation}

\noindent where $\phi$ is the real and operator-monotone function $\phi:x\mapsto x\log x$, defined on $[0,1]$. The von Neumann entropy of a density matrix $\omega$ on $\Kcal$ is the quantity

\begin{equation}
\Scal(\omega)=-\Tr [\phi(\omega)].
\label{eqVNentropy}
\end{equation}

\begin{lem}
For all density matrix $\omega$ on $\Kcal$,
\begin{equation}
\Scal(\omega)\leq\Scal(\Lcal(\omega)),
\label{eqlementropy1}
\end{equation}
i.e. $\Lcal$ is entropy increasing. Furthermore equality holds if and only if
\begin{equation}
U\left( \frac{1}{N}I_\Hcal\otimes\omega\right) U^*=\frac{1}{N}I_\Hcal\otimes\Lcal(\omega).
\label{eqlementropy2}
\end{equation}
\label{lementropy}
\end{lem}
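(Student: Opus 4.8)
The plan is to realize $\Lcal(\omega)$ as the partial trace of the state obtained by conjugating the product state $\frac{1}{N}I_\Hcal\otimes\omega$ by the unitary $U$, and then to exploit three elementary properties of the von Neumann entropy: its invariance under unitary conjugation, its additivity on tensor products, and its subadditivity together with the known equality case.

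First I would set $\rho=U\left(\frac{1}{N}I_\Hcal\otimes\omega\right)U^*$, which is a density matrix on $\Hcal\otimes\Kcal$ since $U$ is unitary and $\frac{1}{N}I_\Hcal\otimes\omega$ is a state. Writing $\rho_\Hcal=\Tr_\Kcal[\rho]$ and $\rho_\Kcal=\Tr_\Hcal[\rho]$ for its two marginals, I read directly from Equation (\ref{eqCP1}) that $\rho_\Kcal=\Lcal(\omega)$; being a marginal of a density matrix, $\Lcal(\omega)$ is itself a density matrix, so $\Scal(\Lcal(\omega))$ is well defined. Because the von Neumann entropy is invariant under conjugation by a unitary and additive on tensor products, and because $\Scal\left(\frac{1}{N}I_\Hcal\right)=\log N$, I obtain
\[
\Scal(\rho)=\Scal\left(\frac{1}{N}I_\Hcal\otimes\omega\right)=\log N+\Scal(\omega).
\]

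The core of the argument is subadditivity, $\Scal(\rho)\leq\Scal(\rho_\Hcal)+\Scal(\rho_\Kcal)$. Since $\rho_\Hcal$ is a state on the $N$-dimensional space $\Hcal$ we also have $\Scal(\rho_\Hcal)\leq\log N$, and $\rho_\Kcal=\Lcal(\omega)$, so that
\[
\log N+\Scal(\omega)=\Scal(\rho)\leq\Scal(\rho_\Hcal)+\Scal(\Lcal(\omega))\leq\log N+\Scal(\Lcal(\omega)).
\]
Cancelling $\log N$ yields Inequality (\ref{eqlementropy1}).

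For the equality case I would trace back through these two inequalities. If equality holds in (\ref{eqlementropy1}) the whole chain collapses, forcing both bounds to be tight: the equality case of subadditivity gives $\rho=\rho_\Hcal\otimes\rho_\Kcal$, while the equality case of the maximal-entropy bound gives $\rho_\Hcal=\frac{1}{N}I_\Hcal$. Together these say exactly $\rho=\frac{1}{N}I_\Hcal\otimes\Lcal(\omega)$, which is Equation (\ref{eqlementropy2}); conversely, if (\ref{eqlementropy2}) holds then $\Scal(\rho)=\log N+\Scal(\Lcal(\omega))$, and comparing with $\Scal(\rho)=\log N+\Scal(\omega)$ gives equality in (\ref{eqlementropy1}). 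The only delicate point — and the main obstacle — is invoking the sharp form of subadditivity: strict inequality unless the bipartite state factorizes across $\Hcal\otimes\Kcal$. It is precisely this equality condition, combined with the maximal-entropy equality $\rho_\Hcal=\frac{1}{N}I_\Hcal$, that pins $\rho$ down to the exact product form of Equation (\ref{eqlementropy2}) rather than yielding a weaker conclusion.
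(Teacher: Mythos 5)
Your proof is correct, but it follows a genuinely different route from the paper's. You package the whole argument into two black-box facts about the von Neumann entropy: subadditivity $\Scal(\rho)\leq\Scal(\rho_\Hcal)+\Scal(\rho_\Kcal)$ with its sharp equality case ($\rho=\rho_\Hcal\otimes\rho_\Kcal$, which follows from the vanishing of the quantum mutual information $\Scal(\rho\,\|\,\rho_\Hcal\otimes\rho_\Kcal)$), and the maximal-entropy bound $\Scal(\rho_\Hcal)\leq\log N$ with equality only for the maximally mixed state. The paper instead never invokes subadditivity: it pinches $\rho=U(\frac{1}{N}I_\Hcal\otimes\omega)U^*$ along the projections $P_i=\proj{e_i}{e_i}\otimes I_\Kcal$ for an orthonormal basis $(e_i)$ of $\Hcal$, and chains three more elementary entropy properties (pinching increases entropy, concavity, and $\Scal(\sum_i p_i\omega_i)\leq\sum_i p_i\Scal(\omega_i)+H(p_i)$), each with its own equality condition; tracking those conditions forces $p_i=\frac{1}{N}$, the conditional states $\omega_i$ to be identical, and $\rho$ to be block-diagonal, which assembles into $\rho=\frac{1}{N}I_\Hcal\otimes\Lcal(\omega)$. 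Your version is shorter and more conceptual, at the cost of relying on the (standard but nontrivial) equality case of subadditivity; the paper's version is longer but self-contained at the level of the three listed properties and exhibits the product structure explicitly through the decomposition $\sum_i\frac{1}{N}\proj{e_i}{e_i}\otimes\omega'$. Both correctly identify that the two saturated inequalities pin $\rho$ down to the exact product form of Equation (\ref{eqlementropy2}), and both handle the easy converse direction the same way.
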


\noindent Inequality (\ref{eqlementropy1}) is not new. In fact $\Lcal$ is a doubly stochastic CP map (i.e. it is identity and trace preserving) and basic results in majorization theory imply that a CP map verify Inequality (\ref{eqlementropy1}) if and only if it is doubly stochastic (\cite{MOA}). However, as we are interested in the equality case, it is necessary for us to prove it.

\begin{proof}[Proof of Lemma \ref{lementropy}]
We will use the three following properties of Shannon entropy. Let $(P_i)_{i\in I}$ be a complete set of orthogonal projection on $\Kcal$. Let $\omega$ be a density matrix on $\Kcal$ and write $\omega'=\sum_{i\in I}{P_i\omega P_i}$. Let $(p_i)_{1\leq i \leq m}$ be a probability measure on $\{1,...,m\}$ for some $m$ and $\omega_i$ be density matrices on $\Kcal$. Then:
\begin{enumerate}
\item $\Scal(\omega)\leq\Scal(\omega')$, with equality if and only if $\omega=\omega'$;
\item $\sum_{i=1}^m{p_i\Scal(\omega_i)}\leq\Scal(\sum_i{p_i\omega_i})$, with equality if and only if the $\omega_i$'s are identical;
\item $\Scal(\sum_i{p_i\omega_i})\leq\sum_i{p_i\Scal(\omega_i)}+H(p_i)$, with equality if and only if the $\omega_i$'s have orthogonal supports.
\end{enumerate}
Let $(e_i)_{1\leq i \leq N}$ be an orthonormal basis of $\Hcal$ and write $P_i=\proj{e_i}{e_i}\otimes I_\Kcal$ the orthogonal projection on the subspace $\C e_i\otimes \Kcal$. The $P_i$'s form a complete set of orthogonal projections on $\Kcal$. Define
\begin{align*}
& p_i=\Tr\left[P_i U\left( \frac{1}{N}I_\Hcal\otimes\omega\right) U^* \right]; \\
& \tilde{\omega}_i=\frac{1}{p_i}P_i U\left( \frac{1}{N}I_\Hcal\otimes\omega\right) U^* P_i; \\
& \omega_i=\Tr_\Hcal[\tilde{\omega}_i]\text{ so that }\tilde{\omega}_i=\frac{1}{p_i}\proj{e_i}{e_i}\otimes\omega_i.
\end{align*}
Remark that $(p_i)$ defines a probability measure on $\{1,...,N\}$ and that the $\tilde{\omega}_i$'s have orthogonal supports. Furthermore, it is not difficult to check that $\Scal(\omega_i)=\Scal(\tilde{\omega}_i)$ for all $i=1,...,N$ and $\Lcal(\omega)=\sum_i{p_i\omega_i}$. We are now ready to make the following computation:
\begin{align*}
\log N + \Scal(\omega)
& = \Scal(\frac{1}{N}I_\Hcal\otimes\omega) \\
& = \Scal(U\ \frac{1}{N}I_\Hcal\otimes\omega\ U^*) \\
& \leq \Scal(\sum_{i=1}^N{p_i\tilde{\omega}_i})\text{ because of 1.}\\
& = \sum_{i=1}^N{p_i\Scal(\tilde{\omega}_i)}+H(p_i)\text{ because of 3.} \\
& = \sum_{i=1}^N{p_i\Scal(\omega_i)}+H(p_i) \\
& \leq \Scal(\sum_{i=1}^N{p_i\omega_i}) + H(p_i)\text{ because of 2.}\\
& \leq \Scal(\Lcal(\omega)) + \log N.
\end{align*}
Consequently, we obtain the desired inequality:
\begin{equation*}
\Scal(\omega)\leq\Scal(\Lcal(\omega)).
\end{equation*}
The Equality case above means that all the previous inequalities are in fact equalities. In particular:
\begin{enumerate}
\item[(i)]$\Scal\left(U\left(\frac{1}{N}I_\Hcal\otimes\omega\right)U^*\right)=\Scal(\sum_{i=1}^N{p_i\tilde{\omega}_i})$,
\item[(ii)]$H(p_i)=\log N$, 
\item[(iii)]$\sum_{i=1}^N{p_i\Scal(\omega_i)}=\Scal(\sum_{i=1}^N{p_i\omega_i})$.
\end{enumerate}
Consequently by 1. and 2. and the fact that $H(p_i)=\log N$ if and only if $(p_i)$ is the uniform probability measure, we get:
\begin{enumerate}
\item[(i)]$U\left(\frac{1}{N}I_\Hcal\otimes\omega\right)U^*=\sum_i{p_i\tilde{\omega}_i}$,
\item[(ii)]$p_i=\frac{1}{N}$ for all $i=1,...,N$,
\item[(iii)]$\omega_i=\omega'$ for all $i=1,...,N$ for some density matrix $\omega'$ on $\Kcal$.
\end{enumerate}
Using those three points, we obtain
 \begin{equation*}
U\ \left(\frac{1}{N}I_\Hcal\otimes\omega\right)\ U^*= \sum_i{\frac{1}{N}\proj{e_i}{e_i}\otimes\omega'}=\frac{1}{N}I_\Hcal\otimes\omega'.
\end{equation*}
Taking the partial trace with respect to $\Hcal$, we see that $\omega'=\Lcal(\omega)$ which end the proof.
\end{proof}

\begin{proof}[Proof of Theorem \ref{theocharaspect}]
The idea behind the proof of each part of the theorem is the same. We start with the following computation. For all $X\in\Bcal(\Kcal)$; we have
\begin{align*}
X\in\Acal'(U) 
& \Leftrightarrow [I_\Hcal\otimes X,U]=0 \\
& \Leftrightarrow U \left(\frac{1}{N}I_\Hcal\otimes X\right) U^*= \frac{1}{N}I_\Hcal\otimes X \\
& \qquad\text{ and }\quad U^* \left(\frac{1}{N}I_\Hcal\otimes X\right) U= \frac{1}{N}I_\Hcal\otimes X \\
& \Rightarrow \Lcal(X)=X\text{ and }\Lcal^*(X)=X \\
& \Rightarrow X \text{ is an eigeinvector of }\\
&\qquad\Lcal\text{ associated to the eigeinvalue }1.
\end{align*}
In the same way, for all $X\in\Bcal(\Kcal)$
\begin{align*}
X\in\Acal_r'(U) 
& \Leftrightarrow [I_\Hcal\otimes X,U^* \left(Y\otimes I_\Kcal\right)]=0\quad\forall Y\in\Bcal(\Kcal) \\
& \Leftrightarrow [U \left(I_\Hcal\otimes X\right) U^*, Y\otimes I_\Kcal]=0\quad\forall Y\in\Bcal(\Kcal) \\
& \Leftrightarrow \text{ there exists }X'\in\Bcal(\Kcal)\text{ such that:}\\
& \qquad U\left( \frac{1}{N}I_\Hcal\otimes X \right)U^*= \frac{1}{N}I_\Hcal\otimes X'\\
& \Rightarrow  \text{ there exists }X'\in\Bcal(\Kcal)\text{ such that:}\\
& \qquad\Lcal(X)=X'\text{ and }\Lcal^*(X')=X \\
& \Rightarrow X \text{ is a right-singularvector of }\\
& \qquad\Lcal\text{ associated to the singularvalue }1.
\end{align*}
In order to complete the proof, we need to show the converse. We do that for $\Acal_r'(U)$ only, as it is the same for $\Acal(U)'$.
\\First remark that $\Lcal(I_\Kcal)=\Lcal^*(I_\Kcal)=I_\Kcal$, so that $I_\Kcal$ is always an eigenvector of $\Lcal$ and $\Lcal^*$ for the eigenvalue $1$ (and consequently a left and right singularvector for the singularvalue $1$). Furthermore, if $\Lcal(X)=X'$ and $\Lcal^*(X')=X$ for some $X,X'\in\Bcal(\Kcal)$, than $\Lcal(X^*)={X'}^*$ and $\Lcal^*({X'}^*)=X^*$. Consequently, the right-singularspace of $\Lcal$ associated to the right-eigenvalue $1$ is a system operator, that is a norm-closed $*$-stable subspace of $\Bcal(\Kcal)$. In particular, it is generated by its positive elements, so that we only need to prove the result for density matrices.
\\In view of Lemma \ref{lementropy}, we only need to prove that for all density matrices $\omega$ such that $\omega=\Lcal^*\circ\Lcal(\omega)$,
\begin{equation*}
\Scal(\omega)=\Scal(\Lcal(\omega)).
\end{equation*}
\noindent Applying again Lemma \ref{lementropy} both for $U^*$ and $U$, we get
\begin{equation*}
\Scal(\Lcal^*\circ\Lcal(\omega))=\Scal(\omega)\leq\Scal(\Lcal(\omega))\leq\Scal(\Lcal^*\circ\Lcal(\omega)),
\end{equation*}
\noindent which shows the desired equality.
\end{proof}

\label{sect24}

\bibliographystyle{abbrv}
\bibliography{biblio}
\end{document}